\newcommand\R{\mathbb{R}}
\newcommand\C{\mathbb{C}}
\newcommand\N{\mathbb{N}}
\renewcommand\i{{\rm 1\kern -.3600em 1}}
\newcommand\vd{\delta}\newcommand\eps{\varepsilon}
\newtheorem{theorem}{Theorem}[section]
\newtheorem{lemma}[theorem]{Lemma}
\newtheorem{proposition}[theorem]{Proposition}
\newtheorem{remark}[theorem]{Remark}
\numberwithin{equation}{section}
\begin{document}

\title[Kawasaki dynamics via generating functionals evolution]
    {Kawasaki dynamics in the continuum via generating functionals evolution}

\author{D. L. Finkelshtein}
\address{Institute of Mathematics, Ukrainian National Academy of
Sciences, 01601 Kiev, Ukraine} \curraddr{} \email{fdl@imath.kiev.ua}
\thanks{Financial support of DFG through SFB 701 (Bielefeld University),
German-Ukrainian Project 436 UKR 113/97 and FCT through
PTDC/MAT/100983/2008 and ISFL-1-209 are gratefully acknowledged.}

\author{Yu. G. Kondratiev}
\address{Fakult\"at f\"ur Mathematik, Universit\"at Bielefeld,
D 33615 Bielefeld, Germany; Forschungszentrum BiBoS, Universit\"at
Bielefeld, D 33615 Bielefeld, Germany}
\email{kondrat@mathematik.uni-bielefeld.de}

\author{M. J. Oliveira}
\address{Universidade Aberta, P 1269-001 Lisbon, Portugal;
CMAF, University of Lisbon, P 1649-003 Lisbon, Portugal}
\email{oliveira@cii.fc.ul.pt}

\subjclass[2000]{Primary 82C22; Secondary 46G20, 46E50}
\date{DD/MM/2011}
\keywords{Generating functional, Kawasaki dynamics, Interacting
particle system, Continuous system, Ovsjannikov's method, Vlasov
scaling}

\begin{abstract}
We construct the time evolution of Kawasaki dynamics for a spatial
infinite particle system in terms of generating functionals. This is
carried out by an Ovsjannikov-type result in a scale of Banach
spaces, which leads to a local (in time) solution. An application of
this approach to Vlasov-type scaling in terms of generating
functionals is considered as well.
\end{abstract}

\maketitle

\section{Introduction}

Originally, Bogoliubov generating functionals (GF for short) were
introduced by N.~N. Bogoliubov in \cite{Bog46} to define correlation
functions for statistical mechanics systems. Apart from this
specific application, and many others, GF are, by themselves, a
subject of interest in infinite dimensional analysis. This is
partially due to the fact that to a probability measure $\mu$
defined on the space $\Gamma$ of locally finite configurations
$\gamma\subset\R^d$ one may associate a GF
$$
B_\mu(\theta) :=\int_\Gamma d\mu(\gamma)\,\prod_{x\in \gamma }(1+\theta(x)),
$$
yielding an alternative method to study the stochastic dynamics of an
infinite particle system in the continuum by exploiting the close relation
between measures and GF \cite{FKO05,KoKuOl02}.

Existence and uniqueness results for the Kawasaki dynamics through GF arise
naturally from Picard-type approximations and a method suggested in
\cite[Appendix 2, A2.1]{GS58} in a scale of Banach spaces (see
e.g.~\cite[Theorem 2.5]{FKO11}). This method, originally presented for
equations with coefficients time independent, has been extended to an abstract
and general framework by T.~Yamanaka in \cite{Y60} and L.~V.~Ovsjannikov in
\cite{O65} in the linear case, and many applications were exposed by F.~Treves
in \cite{T68}. As an aside, within an analytical framework outside of our
setting, all these statements are very closely related to variants of the
abstract Cauchy-Kovalevskaya theorem. However, all these abstract forms only
yield a local solution, that is, a solution which is defined on a finite time
interval. Moreover, starting with an initial condition from a certain Banach
space, in general the solution evolves on larger Banach spaces.

As a particular application, this work concludes with the study of the
Vlasov-type scaling proposed in \cite{FKK10a} for general continuous particle
systems and accomplished in \cite{BKKK11} for the Kawasaki dynamics. The
general scheme proposed in \cite{FKK10a} for correlation functions yields a
limiting hierarchy which possesses a chaos preservation property, namely,
starting with a Poissonian (non-homogeneous) initial state this structural
property is preserved during the time evolution. In Section
\ref{Subsection3.2} the same problem is formulated in terms of GF and its
analysis is carried out by the general Ovsjannikov-type result in a scale of
Banach spaces presented in \cite[Theorem 4.3]{FKO11}.

\section{General Framework}\label{Section2}

In this section we briefly recall the concepts and results of combinatorial
harmonic analysis on configuration spaces and Bogoliubov generating
functionals needed throughout this work (for a detailed explanation see
\cite{KoKu99,KoKuOl02}).

\subsection{Harmonic analysis on configuration spaces}\label{Subsection2.1}

Let $\Gamma :=\Gamma _{\R^d}$ be the configuration space over $\mathbb{R}^d$,
$d\in\mathbb{N}$,
\[
\Gamma :=\left\{ \gamma \subset \mathbb{R}^d:\left| \gamma\cap\Lambda\right|
<\infty \hbox{
for every compact }\Lambda\subset \mathbb{R}^d\right\} ,
\]
where $\left| \cdot \right|$ denotes the cardinality of a set. We
identify each $\gamma \in \Gamma $ with the non-negative Radon measure
$\sum_{x\in \gamma }\delta_x$ on the Borel $\sigma$-algebra
$\mathcal{B}(\mathbb{R}^d)$, where $\delta_x$ is the Dirac measure with mass
at $x$, which allows to endow $\Gamma$ with the vague topology and the
corresponding Borel $\sigma$-algebra $\mathcal{B}(\Gamma)$.

For any $n\in\N_0:=\N\cup\{0\}$ let
\[
\Gamma^{(n)}:= \{ \gamma\in \Gamma: \vert \gamma\vert = n\},\ n\in \N,\quad \Gamma^{(0)} := \{\emptyset\}.
\]
Clearly, each $\Gamma^{(n)}$, $n\in\N$, can be identify with the
symmetrization of the set $ \{(x_1,...,x_n)\in (\R^d)^n: x_i\not=
x_j \hbox{ if } i\not= j\}$, which induces a natural (metrizable) topology on
$\Gamma^{(n)}$ and the corresponding Borel $\sigma$-algebra
$\mathcal{B}(\Gamma^{(n)})$. In particular, for the Lebesgue product
measure $(dx)^{\otimes n}$ fixed on $(\R^d)^n$, this identification yields a
measure $m^{(n)}$ on $(\Gamma^{(n)},\mathcal{B}(\Gamma^{(n)}))$. For $n=0$ we
set $m^{(0)}(\{\emptyset\}):=1$. This leads to the definition of the space of
finite configurations
\[
\Gamma_0 := \bigsqcup_{n=0}^\infty \Gamma^{(n)}
\]
endowed with the topology of disjoint union of topological spaces and the
corresponding Borel $\sigma$-algebra $\mathcal{B}(\Gamma_0)$, and to the
so-called Lebesgue-Poisson measure on $(\Gamma_0,\mathcal{B}(\Gamma_0))$,
\begin{equation}\label{LPm-def}
\lambda:=\lambda_{dx}:=\sum_{n=0}^\infty \frac{1}{n!} m^{(n)}.
\end{equation}

Let $\mathcal{B}_c(\R^d)$ be the set of all bounded Borel sets in
$\R^d$ and, for each $\Lambda\in \mathcal{B}_c(\R^d)$, let
$\Gamma_\Lambda := \{\eta\in \Gamma: \eta\subset \Lambda\}$.
Evidently $\Gamma_\Lambda = \bigsqcup_{n=0}^\infty
\Gamma_\Lambda^{(n)}$, where $\Gamma_\Lambda^{(n)}:= \Gamma_\Lambda
\cap \Gamma^{(n)}$, $n\in \N_0$. Given a complex-valued
$\mathcal{B}(\Gamma_0)$-measurable function $G$ such that
$G\!\!\upharpoonright _{\Gamma\backslash\Gamma_\Lambda}\equiv 0$ for
some $\Lambda \in \mathcal{B}_c(\R^d)$, the $K$-transform of $G$ is
a mapping $KG:\Gamma\to\mathbb{C}$ defined at each
$\gamma\in\Gamma$ by
\begin{equation}
(KG)(\gamma ):=\sum_{{\eta \subset \gamma}\atop{\vert\eta\vert < \infty} }
G(\eta ).
\label{Eq2.9}
\end{equation}
It has been shown in \cite{KoKu99} that the $K$-transform is a linear and
invertible mapping.

Let $\mathcal{M}_{\mathrm{fm}}^1(\Gamma)$ be the set of all
probability measures $\mu$ on $(\Gamma ,\mathcal{B}(\Gamma))$ with
finite local moments of all orders, i.e.,
\[
\int_\Gamma d\mu (\gamma)\, |\gamma\cap\Lambda |^n<\infty\quad
\mathrm{for\,\,all}\,\,n\in\N
\mathrm{\,\,and\,\,all\,\,} \Lambda \in \mathcal{B}_c(\R^d),
\]
and let $B_{\mathrm{bs}}(\Gamma_0)$ be the set of all complex-valued
bounded $\mathcal{B}(\Gamma_0)$-measurable functions with bounded
support, i.e., $G\!\!\upharpoonright _{\Gamma _0\backslash
\left(\bigsqcup_{n=0}^N\Gamma _\Lambda ^{(n)}\right) }\equiv 0$ for
some $N\in\N_0, \Lambda \in \mathcal{B}_c(\R^d)$. Given a
$\mu\in\mathcal{M}_{\mathrm{fm}}^1(\Gamma)$, the so-called
correlation measure $\rho_\mu$ corresponding to $\mu$ is a measure
on $(\Gamma _0,\mathcal{B}(\Gamma _0))$ defined for all $G\in
B_{\mathrm{bs}}(\Gamma_0)$ by
\begin{equation}
\int_{\Gamma _0}d\rho _\mu(\eta )\,G(\eta )=\int_\Gamma d\mu (\gamma)\, \left(
KG\right) (\gamma).  \label{Eq2.16}
\end{equation}
This definition implies, in particular, that $B_{\mathrm{bs}}(\Gamma_0)\subset
L^1(\Gamma_0,\rho_\mu)$.\footnote{Throughout this work all
$L^p$-spaces, $p\geq 1$, consist of complex-valued functions.} Moreover, still
by (\ref{Eq2.16}), on $B_{\mathrm{bs}}(\Gamma_0)$ the inequality $\Vert
KG\Vert_{L^1(\Gamma,\mu)}\leq \Vert G\Vert_{L^1(\Gamma_0,\rho_\mu)}$ holds, allowing
an extension of the $K$-transform to a bounded operator
$K:L^1(\Gamma_0,\rho_\mu)\to L^1(\Gamma,\mu)$ in such a way that
equality (\ref{Eq2.16}) still holds for any $G\in
L^1(\Gamma_0,\rho_\mu)$. For the extended operator the explicit form
(\ref{Eq2.9}) still holds, now $\mu$-a.e. In particular, for coherent states
$e_\lambda(f)$ of complex-valued $\mathcal{B}(\R^d)$-measurable functions $f$,
\begin{equation}\label{taquase6}
e_\lambda (f,\eta ):=\prod_{x\in \eta }f\left( x\right) ,\ \eta \in
\Gamma _0\!\setminus\!\{\emptyset\},\quad  e_\lambda (f,\emptyset ):=1.
\end{equation}
Additionally, if $f$ has compact support we have
\begin{equation}\label{Kcog}
\left( Ke_\lambda (f)\right) (\gamma )=\prod_{x\in \gamma }(1+f(x))
\end{equation}
for all $\gamma\in\Gamma$, while for functions $f$ such that
$e_\lambda(f)\in L^1(\Gamma_0,\rho_\mu)$ equality \eqref{Kcog} holds, but only
for $\mu$-a.a.~$\gamma\in\Gamma$. Concerning the Lebesgue-Poisson measure
\eqref{LPm-def}, we observe that $e_\lambda(f)\in L^p(\Gamma_0,\lambda)$
whenever $f\in L^p:=L^p(\R^d,dx)$ for some $p\geq 1$. In this case,
$\| e_\lambda(f)\|^p_{L^p}=\exp(\| f\|^p_{L^p})$. In particular, for $p=1$, in
addition we have
\begin{equation*}
\int_{\Gamma_0}d\lambda(\eta)\, e_\lambda(f,\eta)= \exp\left(\int_{\R^d}dx\,f(x)\right),
\end{equation*}
for all $f\in L^1$. For more details see \cite{KoKuOl00b}.

\subsection{Bogoliubov generating functionals}\label{Subsection2.2}

Given a probability measure $\mu$ on $(\Gamma, \mathcal{B} (\Gamma))$ the
so-called Bogoliubov generating functional (GF for short) $B_\mu$
corresponding to $\mu$ is the functional defined at each
$\mathcal{B}(\R^d)$-measurable function $\theta$ by
\begin{equation}
B_\mu(\theta) :=\int_\Gamma d\mu(\gamma)\,\prod_{x\in\gamma}(1+\theta (x)),
\label{Dima2}
\end{equation}
provided the right-hand side exists. It is clear from (\ref{Dima2})
that the domain of a GF $B_\mu$ depends on the underlying measure $\mu$ and,
conversely, the domain of $B_\mu$ reflects special properties over the
measure $\mu$. Throughout this work we will consider GF defined on
the whole complex $L^1$ space. This implies, in particular, that the underlying
measure $\mu$ has finite local exponential moments, i.e.,
\[
\int_\Gamma d\mu (\gamma )\, e^{\alpha|\gamma\cap\Lambda |}<\infty \quad
\hbox{for all}\,\,\alpha>0\,\,
\hbox{and all}\,\,\Lambda \in \mathcal{B}_c(\R^d),
\]
and thus $\mu\in\mathcal{M}_{\mathrm{fm}}^1(\Gamma)$.  According to the
previous subsection, this implies that to such a measure $\mu$ one may
associate the correlation measure $\rho_\mu$, which leads to a description of
the functional $B_\mu$ in terms of either the measure $\rho_\mu$:
\[
B_\mu(\theta)
= \int_\Gamma d\mu(\gamma)\,\left( Ke_\lambda (\theta)\right) (\gamma)
= \int_{\Gamma_0}d\rho_\mu(\eta)\, e_\lambda (\theta, \eta),
\]
or the so-called correlation function $k_\mu:=\frac{d\rho_\mu}{d\lambda}$
corresponding to the measure $\mu$, if $\rho_\mu$ is absolutely continuous
with respect to the Lebesgue--Poisson measure $\lambda$:
\begin{equation}\label{BF_via_cf}
B_\mu(\theta)=\int_{\Gamma_0}d\lambda(\eta)\, e_\lambda (\theta, \eta)k_\mu(\eta).
\end{equation}

Throughout this work we will assume, in addition, that GF are entire on the
$L^1$ space \cite{KoKuOl02}, which is a natural environment, namely, to
recover the notion of correlation function. For a generic entire functional
$B$ on $L^1$, this assumption implies that $B$ has a representation in terms
of its Taylor expansion,
$$
B(\theta_0+ z \theta)=\sum_{n=0}^\infty \frac{z^n}{n!}
d^nB(\theta_0;\theta ,...,\theta),\quad z\in\C, \theta\in L^1,
$$
being each differential $d^nB(\theta _0;\cdot), n\in\N, \theta _0\in L^1$
defined by a symmetric kernel $$\vd^n B(\theta_0;\cdot)\in
L^\infty (\R^{dn}):=L^\infty \bigl((\R^d)^n,(dx)^{\otimes n}\bigr),$$
called the variational derivative of $n$-th order of $B$ at the
point $\theta _0$. That is,
\begin{align}
d^nB(\theta _0;\theta _1,...,\theta _n) :&\!=\frac{\partial
^n}{\partial
z_1...\partial z_n}B\left( \theta _0+\sum_{i=1}^nz_i\theta _i\right) %
\Bigg\vert_{z_1=...=z_n=0} \label{taquase10}\\
&=\,:\int_{(\R^d)^n}dx_1\ldots dx_n\,\vd^n
B(\theta_0;x_1,\ldots,x_n)\prod_{i=1}^n\theta_i(x_i)\nonumber
\end{align}
for all $\theta _1,...,\theta _n\in L^1$. Moreover, the operator
norm of the bounded $n$-linear functional $d^nB(\theta_0;\cdot)$ is
equal to $\left\| \vd^nB(\theta_0;\cdot)\right\|_{L^\infty(\R^{dn})}$ and for
all $r>0$ one has
\begin{align}
\left\| \vd B (\theta_0;\cdot)\right\|_{L^\infty(\R^d)} &\leq
\frac{1}{r} \sup_{\|\theta^\prime \|_{L^1} \leq r}
|B(\theta_0+\theta^\prime)|\label{2011!}
\\
\intertext{and, for $n\geq 2$,} \left\|
\vd^nB(\theta_0;\cdot)\right\|_{L^\infty(\R^{dn})} &\leq n!
\left(\frac{e}{r}\right)^n \sup_{\|\theta^\prime \|_{L^1} \leq r}
|B(\theta_0+\theta^\prime)|. \label{Duarte}
\end{align}

In particular, if $B$ is an entire GF $B_\mu$ on $L^1$ then, in terms of the
underlying measure $\mu$, the entireness property of $B_\mu$ implies that the
correlation measure $\rho_\mu$ is absolutely continuous with respect to the
Lebesgue-Poisson measure $\lambda$ and the Radon-Nykodim derivative
$k_\mu=\dfrac{d\rho_\mu}{d\lambda}$ is given by
\[
k_\mu(\eta )= \vd^{\left|\eta\right|} B_\mu(0;\eta)
\quad \text{for $\lambda$-a.a.
}\eta \in \Gamma _0. 
\]

In what follows, for each $\alpha>0$, we consider the Banach space
$\mathcal{E}_\alpha$ of all entire functionals $B$ on $L^1$ such that
\[
\left\| B\right\| _\alpha :=\sup_{\theta \in L^1}
\left( \left|
B(\theta )\right| e^{-\frac{1}{\alpha} \left\| \theta \right\| _{L^1}}\right)
<\infty,
\]
see \cite{KoKuOl02}. This class of Banach spaces has the particularity that,
for each $\alpha_0>0$, the family
$\{\mathcal{E}_\alpha: 0<\alpha\leq\alpha_0\}$ is a scale of Banach spaces,
that is,
\[
\mathcal{E}_{\alpha''}\subseteq \mathcal{E}_{\alpha'},\quad
\|\cdot\|_{\alpha'}\leq \|\cdot\|_{\alpha''}
\]
for any pair $\alpha'$, $\alpha''$ such that $0<\alpha'< \alpha''\leq\alpha_0$.

\section{The Kawasaki dynamics}\label{Subsection3.1}

The Kawasaki dynamics is an example of a hopping particle model where, in this
case, particles randomly hop over the space $\R^d$ according to a rate
depending on the interaction between particles.  More precisely, let
$a:\R^d\to\left[0,+\infty\right)$ be an even and integrable function and let
$\phi:\R^d\to\left[0,+\infty\right]$ be a pair potential, that is, a
$\mathcal{B}(\R^d)$-measurable function such that $\phi(-x)=\phi(x)\in \R$ for
all $x\in \R^d\setminus\{0\}$, which we will assume to be integrable. A
particle located at a site $x$ in a given configuration $\gamma\in\Gamma$ hops
to a site $y$ according to a rate given by $a(x-y)\exp(-E(y,\gamma))$, where
$E(y,\gamma)$ is a relative energy of interaction between the site $y$ and the
configuration $\gamma$ defined by
\[
E(y,\gamma ):=\sum_{x\in \gamma }\phi (x-y)\in[0,+\infty].
\]
Informally, the behavior of such an infinite particle system is described by
\begin{equation}\label{Vl41}
(LF)(\gamma)= \sum_{x\in\gamma}\int_{\R^d}dy\,a(x-y)e^{-E(y,\gamma)}\left(F(\gamma\setminus\{x\}\cup\{y\})-F(\gamma)\right).
\end{equation}

Given an infinite particle system, as the Kawasaki dynamics, its time
evolution in terms of states is informally given by the so-called
Fokker-Planck equation,
\begin{equation}
\frac{d\mu_t}{dt}=L^*\mu_t, \qquad
\mu_t\!\bigm|_{t=0}=\mu_0\label{FokkerPlanck},
\end{equation}
where $L^*$ is the dual operator of $L$. Technically, the use of definition
\eqref{Eq2.16} allows an alternative approach to the study of
\eqref{FokkerPlanck} through the corresponding correlation functions
$k_t:=k_{\mu_t}$, $t\geq0$, provided they exist. This leads to the Cauchy
problem
\begin{equation*}
\frac \partial {\partial t}k_t=\hat L^*k_t,\quad
{k_t}_{|t=0}=k_0,
\end{equation*}
where $k_0$ is the correlation function corresponding to the initial
distribution $\mu_0$ and $\hat L^*$ is the dual operator of $\hat L:=K^{-1}LK$
in the sense
\begin{equation*}
\int_{\Gamma_0}d\lambda(\eta)\,(\hat LG)(\eta) k(\eta)=
\int_{\Gamma_0}d\lambda(\eta)\,G(\eta) (\hat L^*k)(\eta).
\end{equation*}
Through the representation \eqref{BF_via_cf}, this gives us a way to express
the dynamics also in terms of the GF $B_t$ corresponding to $\mu_t$, i.e.,
informally,
\begin{align}
\label{obtevol}\frac \partial {\partial t}B_t(\theta ) &=\int_{\Gamma _0}d\lambda(\eta)\,e_\lambda(\theta ,\eta )\left( \frac \partial{\partial t}k _t(\eta )\right)=\int_{\Gamma _0}d\lambda(\eta)\, e_\lambda(\theta ,\eta )(\hat L^*k_t)(\eta )\\
&=\int_{\Gamma _0}d\lambda(\eta)\,(\hat
Le_\lambda(\theta))(\eta)k_t(\eta)=:(\tilde L B_t)(\theta).\nonumber
\end{align}
This leads to the time evolution equation
\begin{equation}
\frac{\partial B_t}{\partial t}=\tilde LB_t,\label{evolgf}
\end{equation}
where, in the case of the Kawasaki dynamics, $\tilde L$ is given
cf.~\cite{FKO05} by
\begin{align}
&(\tilde LB)(\theta)\label{LtildeGL}\\
=&\int_{\R^d}dx\int_{\R^d}dy\,a(x-y)e^{-\phi(x-y)}(\theta(y)-\theta(x))
\delta B(\theta e^{-\phi(y-\cdot)}
+e^{-\phi(y-\cdot)}-1;x).\nonumber
\end{align}

\begin{theorem}\label{Th13} Given an $\alpha_0>0$, let
$B_0\in\mathcal{E}_{\alpha_0}$. For each $\alpha\in(0,\alpha_0)$ there is a
$T>0$ (which depends on $\alpha,\alpha_0$) such that there is a
unique solution $B_t$, $t\in[0,T)$, to the initial value problem
\eqref{evolgf}, \eqref{LtildeGL}, ${B_t}_{|t=0}= B_0$ in the space
$\mathcal{E}_\alpha$.
\end{theorem}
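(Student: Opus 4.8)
The plan is to recast the evolution equation \eqref{evolgf} as an abstract Cauchy problem in the scale of Banach spaces $\{\mathcal{E}_\alpha : 0<\alpha\leq\alpha_0\}$ and invoke the Ovsjannikov-type result from \cite[Theorem 4.3]{FKO11} cited in the excerpt. The crux of the argument is not the functional-analytic machinery — that is supplied by the abstract theorem — but verifying its hypotheses for the specific operator $\tilde L$ of \eqref{LtildeGL}. Concretely, I need to exhibit a constant such that, for every pair $\alpha'<\alpha''$ in $(0,\alpha_0]$, the operator $\tilde L$ maps $\mathcal{E}_{\alpha''}$ into $\mathcal{E}_{\alpha'}$ and obeys a bound of the form
\[
\|\tilde L B\|_{\alpha'} \leq \frac{M}{\alpha''-\alpha'}\,\|B\|_{\alpha''}
\]
for all $B\in\mathcal{E}_{\alpha''}$, with $M$ independent of $\alpha',\alpha''$. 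Given such a bound, the abstract theorem produces, for each $\alpha<\alpha_0$, a local solution on some interval $[0,T)$ with $T$ depending on $\alpha,\alpha_0$ (through $M$ and the gap $\alpha_0-\alpha$), together with uniqueness in $\mathcal{E}_\alpha$; this is exactly the conclusion sought.

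Carrying this out, I would first record how $\tilde L$ acts through the variational derivative. The operator is linear in $B$ and the only occurrence of $B$ is via the first variational derivative $\delta B(\cdot\,;x)$ evaluated at the shifted argument $\theta e^{-\phi(y-\cdot)}+e^{-\phi(y-\cdot)}-1$. The key estimate \eqref{2011!} controls $\|\delta B(\theta_0;\cdot)\|_{L^\infty(\R^d)}$ by $\frac{1}{r}\sup_{\|\theta'\|_{L^1}\leq r}|B(\theta_0+\theta')|$; combined with the definition of $\|B\|_{\alpha''}$, this gives a pointwise bound
\[
|\delta B(\theta_0;x)| \leq \frac{1}{r}\,\|B\|_{\alpha''}\,\exp\!\left(\tfrac{1}{\alpha''}\bigl(\|\theta_0\|_{L^1}+r\bigr)\right)
\]
uniformly in $x$. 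I would then substitute $\theta_0 = \theta e^{-\phi(y-\cdot)}+e^{-\phi(y-\cdot)}-1$, estimate $\|\theta_0\|_{L^1}$ in terms of $\|\theta\|_{L^1}$ (using $0\leq e^{-\phi}\leq 1$ so that $|\theta e^{-\phi(y-\cdot)}|\leq|\theta|$, and using integrability of $1-e^{-\phi}$), and integrate against the kernel $a(x-y)e^{-\phi(x-y)}(\theta(y)-\theta(x))$ in both $x$ and $y$.

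The main obstacle — and the place where the scale gap $\alpha''-\alpha'$ is consumed — is bookkeeping the $\|\theta\|_{L^1}$-dependence so that the resulting bound has the correct weight $e^{-\frac{1}{\alpha'}\|\theta\|_{L^1}}$. The factor $(\theta(y)-\theta(x))$ contributes a further power of $\|\theta\|_{L^1}$ (after integrating, the $\theta(x)$ and $\theta(y)$ terms against the integrable $a$ each yield $\|a\|_{L^1}\|\theta\|_{L^1}$), and this linear growth in $\|\theta\|_{L^1}$ must be absorbed into the difference of exponential weights. The standard device is the elementary inequality $s\,e^{-cs}\leq \frac{1}{ec}$, or more precisely $\|\theta\|_{L^1}\exp\bigl(-(\tfrac{1}{\alpha'}-\tfrac{1}{\alpha''})\|\theta\|_{L^1}\bigr)\leq \frac{1}{e}\bigl(\tfrac{1}{\alpha'}-\tfrac{1}{\alpha''}\bigr)^{-1}$, and since $\tfrac{1}{\alpha'}-\tfrac{1}{\alpha''}=\tfrac{\alpha''-\alpha'}{\alpha'\alpha''}\geq\tfrac{\alpha''-\alpha'}{\alpha_0^2}$, this furnishes precisely the factor $(\alpha''-\alpha')^{-1}$ demanded by the Ovsjannikov hypothesis.

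Finally, I would optimize the free radius $r>0$ in the estimate \eqref{2011!} — a choice such as $r = \alpha''$ (or a fixed multiple thereof) keeps the exponential prefactor $\exp(r/\alpha'')$ bounded by a constant while leaving the $\frac{1}{r}$ factor harmless. Assembling the pieces, the constant $M$ depends only on $\|a\|_{L^1}$, $\|1-e^{-\phi}\|_{L^1}$, and $\alpha_0$, all finite by the standing integrability assumptions on $a$ and $\phi$. With the Ovsjannikov bound established, the cited abstract theorem applies verbatim and delivers the asserted unique local solution $B_t\in\mathcal{E}_\alpha$ on $[0,T)$.
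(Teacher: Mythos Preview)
Your proposal is correct and follows essentially the same route as the paper: establish an Ovsjannikov-type norm bound $\|\tilde L B\|_{\alpha'}\leq \frac{M}{\alpha''-\alpha'}\|B\|_{\alpha''}$ via the Cauchy estimate \eqref{2011!} on $\delta B$, the inequalities $0\leq e^{-\phi}\leq 1$ and $1-e^{-\phi}\in L^1$ to control the shifted argument, and the elementary bound $s e^{-cs}\leq (ec)^{-1}$ to absorb the extra $\|\theta\|_{L^1}$ factor, then invoke the abstract theorem. The paper packages the estimate via an auxiliary lemma (Lemma~\ref{Lemma2}) and chooses $r=\alpha''$ there (your $r=\alpha''$ is equivalent), and it appeals to \cite[Theorem~2.5]{FKO11} rather than Theorem~4.3; the latter is used in the paper only for the convergence statement in the Vlasov-scaling section, so you should cite Theorem~2.5 for the existence/uniqueness assertion.
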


This theorem follows as a particular application of an abstract
Ovsjannikov-type result in a scale of Banach spaces which can be found
e.g.~in \cite[Theorem 2.5]{FKO11}, and the following estimate of norms.

\begin{proposition}\label{Proposition2}
Let $0<\alpha<\alpha_0$ be given. If $B\in\mathcal{E}_{\alpha''}$ for some
$\alpha''\in \left(\alpha,\alpha_0\right]$, then
$\tilde LB\in\mathcal{E}_{\alpha'}$ for all $\alpha\leq\alpha'<\alpha''$, and
we have
\[
\|\tilde LB\|_{\alpha'}\leq
2e^{\frac{\|\phi\|_{L^1}}{\alpha}}\|a\|_{L^1}\frac{\alpha_0}{\alpha''-\alpha'}\|B\|_{\alpha''}.
\]
\end{proposition}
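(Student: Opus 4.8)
The plan is to bound $|(\tilde LB)(\theta)|$ pointwise in $\theta$ and then take the weighted supremum that defines $\|\cdot\|_{\alpha'}$. Abbreviate by $\xi_y:=\theta e^{-\phi(y-\cdot)}+e^{-\phi(y-\cdot)}-1$ the argument appearing in \eqref{LtildeGL}. The first step I would carry out is a uniform-in-$y$ control of $\|\xi_y\|_{L^1}$: since $\phi\geq0$ forces $0\leq e^{-\phi}\leq1$ and $0\leq 1-e^{-\phi(v)}\leq\phi(v)$, a change of variables gives
\[
\|\xi_y\|_{L^1}\leq\int_{\R^d}|\theta(u)|e^{-\phi(y-u)}\,du+\int_{\R^d}\bigl(1-e^{-\phi(v)}\bigr)\,dv\leq\|\theta\|_{L^1}+\|\phi\|_{L^1},
\]
which in particular shows $\xi_y\in L^1$, so that the variational derivative $\delta B(\xi_y;\cdot)$ is well defined for the entire functional $B$.

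Next I would estimate that derivative. Applying the first-order bound \eqref{2011!} at the point $\xi_y$ together with $|B(\psi)|\leq\|B\|_{\alpha''}e^{\|\psi\|_{L^1}/\alpha''}$ yields, for every $r>0$,
\[
\|\delta B(\xi_y;\cdot)\|_{L^\infty(\R^d)}\leq\frac1r\|B\|_{\alpha''}\,e^{\frac{1}{\alpha''}(\|\theta\|_{L^1}+\|\phi\|_{L^1}+r)},
\]
uniformly in $x$ and $y$. Substituting this into \eqref{LtildeGL}, bounding $|\theta(y)-\theta(x)|\leq|\theta(y)|+|\theta(x)|$ and $e^{-\phi}\leq1$, and using Fubini with $\int_{\R^d}\!\int_{\R^d}a(x-y)|\theta(y)|\,dx\,dy=\|a\|_{L^1}\|\theta\|_{L^1}$ (and the symmetric identity for the $\theta(x)$ term), I would arrive at
\[
|(\tilde LB)(\theta)|\leq\frac2r\|a\|_{L^1}\|B\|_{\alpha''}\,e^{\frac{\|\phi\|_{L^1}+r}{\alpha''}}\,\|\theta\|_{L^1}\,e^{\frac{1}{\alpha''}\|\theta\|_{L^1}}.
\]

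The concluding step is a double optimization. Multiplying by $e^{-\|\theta\|_{L^1}/\alpha'}$ and putting $\beta:=\frac{1}{\alpha'}-\frac{1}{\alpha''}>0$, the whole $\theta$-dependence reduces to $\|\theta\|_{L^1}e^{-\beta\|\theta\|_{L^1}}$, whose supremum over $\theta$ equals $\frac{1}{e\beta}=\frac{\alpha'\alpha''}{e(\alpha''-\alpha')}$; the leftover factor $\frac1r e^{r/\alpha''}$ is minimized at $r=\alpha''$ with value $\frac{e}{\alpha''}$. The factors $e$ and $\alpha''$ then cancel and leave
\[
\|\tilde LB\|_{\alpha'}\leq 2\|a\|_{L^1}\|B\|_{\alpha''}\,e^{\frac{\|\phi\|_{L^1}}{\alpha''}}\,\frac{\alpha'}{\alpha''-\alpha'},
\]
and the stated estimate follows from the two monotone relaxations $\frac{1}{\alpha''}\leq\frac1\alpha$ and $\alpha'\leq\alpha_0$. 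I expect the only real subtlety to be performing the two optimizations correctly and separately --- in particular observing that the $r$-factor is independent of $\theta$, so the two minimizations decouple --- whereas the measurability needed for Fubini and the verification that $\tilde LB$ is again \emph{entire} (not merely finite in the weighted sup-norm) are routine inside the framework of Section \ref{Section2}.
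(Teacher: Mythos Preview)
Your proof is correct and follows essentially the same line as the paper: bound the variational derivative via \eqref{2011!}, integrate against $a(x-y)|\theta(y)-\theta(x)|$, then optimize first in $\theta$ using $xe^{-mx}\leq\frac{1}{em}$ and then in $r$ at $r=\alpha''$. The only organizational difference is that the paper packages this computation as a general lemma (Lemma~\ref{Lemma2}, with abstract $\varphi,\psi$ and bounds $c_0,c_1$) and then specializes to $\varphi=e^{-\phi}$, $\psi=e^{-\phi}-1$; this abstraction is reused later for the Vlasov-scaled operators in Proposition~\ref{estimativas}(ii), whereas your direct argument would have to be repeated there.
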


To prove this result as well as other forthcoming ones the next lemma shows to
be useful.

\begin{lemma}\label{Lemma2} Let $\varphi,\psi:\R^d\times\R^d\to\R$ be such that,
for a.a.~$y\in\R^d$, $\varphi(y,\cdot)\in L^\infty:=L^\infty(\R^d)$,
$\psi(y,\cdot)\in L^1$ and $\|\varphi(y,\cdot)\|_{L^\infty}\leq c_0$,
$\|\psi(y,\cdot)\|_{L^1}\leq c_1$ for some constants $c_0,c_1>0$ independent
of $y$. For each $\alpha>0$ and all $B\in\mathcal{E}_\alpha$ let
\[
(L_0B)(\theta):=\int_{\R^d}dx\int_{\R^d}dy\,a(x-y)e^{-k\phi(x-y)}\left(\theta(y)-\theta(x)\right)\delta B(\varphi(y, \cdot)\theta+\psi(y,\cdot);x),
\]
$\theta\in L^1$. Here $a$ and $\phi$ are defined as before and $k\geq 0$ is a
constant. Then, for all $\alpha'>0$ such that $c_0\alpha'<\alpha$, we have
$L_0B\in\mathcal{E}_{\alpha'}$
and
\[
\|L_0B\|_{\alpha'}\leq 2e^{\frac{c_1}{\alpha}}\|a\|_{L^1}\frac{\alpha'}{\alpha-c_0\alpha'}\|B\|_{\alpha}.
\]
\end{lemma}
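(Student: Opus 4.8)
The plan is to estimate the $\alpha'$-norm of $L_0B$ by bounding the variational derivative $\delta B(\varphi(y,\cdot)\theta+\psi(y,\cdot);x)$ pointwise and then integrating out the $x$- and $y$-variables against $a$. First I would recall from \eqref{2011!} that for any fixed argument $\theta_0\in L^1$ and any $r>0$,
\[
\left\|\delta B(\theta_0;\cdot)\right\|_{L^\infty(\R^d)}\leq \frac{1}{r}\sup_{\|\theta'\|_{L^1}\leq r}|B(\theta_0+\theta')|.
\]
Using the definition of the $\alpha$-norm, $|B(\theta_0+\theta')|\leq \|B\|_\alpha\exp\!\bigl(\tfrac{1}{\alpha}\|\theta_0+\theta'\|_{L^1}\bigr)$, so the right-hand side is at most $\tfrac{1}{r}\|B\|_\alpha e^{\frac{1}{\alpha}(\|\theta_0\|_{L^1}+r)}$, and I would optimize over $r$. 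The optimal choice $r=\alpha$ yields the clean bound
\[
\left\|\delta B(\theta_0;\cdot)\right\|_{L^\infty(\R^d)}\leq \frac{e}{\alpha}\,\|B\|_\alpha\,e^{\frac{1}{\alpha}\|\theta_0\|_{L^1}}.
\]

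Next I would apply this with $\theta_0=\varphi(y,\cdot)\theta+\psi(y,\cdot)$. The key point is to control $\|\theta_0\|_{L^1}$: by the triangle inequality and the hypotheses $\|\varphi(y,\cdot)\|_{L^\infty}\leq c_0$, $\|\psi(y,\cdot)\|_{L^1}\leq c_1$, we get
\[
\|\varphi(y,\cdot)\theta+\psi(y,\cdot)\|_{L^1}\leq c_0\|\theta\|_{L^1}+c_1,
\]
so that $|\delta B(\theta_0;x)|\leq \tfrac{e}{\alpha}\|B\|_\alpha e^{\frac{c_1}{\alpha}}e^{\frac{c_0}{\alpha}\|\theta\|_{L^1}}$ uniformly in $x$ and $y$. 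Substituting this into the definition of $L_0B$, bounding $e^{-k\phi(x-y)}\leq 1$ and $|\theta(y)-\theta(x)|\leq|\theta(y)|+|\theta(x)|$, and changing variables in the double integral so that $\int\int dx\,dy\,a(x-y)|\theta(x)|=\|a\|_{L^1}\|\theta\|_{L^1}$ (and symmetrically for the $\theta(y)$ term), I obtain
\[
|(L_0B)(\theta)|\leq \frac{2e}{\alpha}\|a\|_{L^1}\|\theta\|_{L^1}\,e^{\frac{c_1}{\alpha}}\|B\|_\alpha\,e^{\frac{c_0}{\alpha}\|\theta\|_{L^1}}.
\]

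To finish I would multiply by $e^{-\frac{1}{\alpha'}\|\theta\|_{L^1}}$ and take the supremum over $\theta\in L^1$, which is where the exponential gap $\tfrac{1}{\alpha'}>\tfrac{c_0}{\alpha}$ (equivalent to the hypothesis $c_0\alpha'<\alpha$) becomes essential: writing $t=\|\theta\|_{L^1}\geq 0$, the remaining scalar supremum is
\[
\sup_{t\geq 0} t\,e^{-\left(\frac{1}{\alpha'}-\frac{c_0}{\alpha}\right)t}=\frac{1}{e}\left(\frac{1}{\alpha'}-\frac{c_0}{\alpha}\right)^{-1}=\frac{1}{e}\cdot\frac{\alpha\alpha'}{\alpha-c_0\alpha'},
\]
using the elementary fact $\sup_{t\geq0}t e^{-\beta t}=1/(e\beta)$ for $\beta>0$. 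The factor $1/e$ cancels against the $e$ from the derivative estimate, and assembling the constants gives exactly
\[
\|L_0B\|_{\alpha'}\leq 2e^{\frac{c_1}{\alpha}}\|a\|_{L^1}\frac{\alpha'}{\alpha-c_0\alpha'}\|B\|_\alpha,
\]
proving both the claimed membership $L_0B\in\mathcal{E}_{\alpha'}$ and the norm bound. The main obstacle is purely bookkeeping: keeping the two exponential rates $\tfrac{1}{\alpha}$ (inside the derivative estimate) and $\tfrac{1}{\alpha'}$ (defining the target norm) distinct and ensuring the linear-times-exponential supremum converges, which is precisely guaranteed by the condition $c_0\alpha'<\alpha$; the choice $r=\alpha$ in \eqref{2011!} is what produces the clean $\alpha'/(\alpha-c_0\alpha')$ dependence rather than a messier constant.
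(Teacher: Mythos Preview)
Your proof is correct and follows essentially the same route as the paper's: apply the Cauchy estimate \eqref{2011!} with parameter $r$, bound $\|\varphi(y,\cdot)\theta+\psi(y,\cdot)\|_{L^1}\le c_0\|\theta\|_{L^1}+c_1$, integrate against $a$, and then control the linear-times-exponential supremum via $\sup_{t\ge0}te^{-\beta t}=1/(e\beta)$. The only cosmetic difference is that you set $r=\alpha$ immediately after the derivative bound, whereas the paper carries $r$ through the whole estimate and minimizes $\frac{1}{r}e^{(c_1+r)/\alpha}$ at the very end; both yield the identical constant.
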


\begin{proof}
First we observe that from the considerations done in Subsection
\ref{Subsection2.2} it follows that $L_0B$ is an entire functional on $L^1$
and, in addition, that for all $r>0$, $\theta\in L^1$, and a.a.~$x,y\in\R^d$,
\begin{align*}
\left|\delta B(\varphi(y, \cdot)\theta+\psi(y,\cdot);x)\right|\leq
&\,
\left\|\delta B(\varphi(y, \cdot)\theta+\psi(y,\cdot);\cdot)\right\|_{L^\infty}\\
\leq &\, \frac{1}{r}\sup_{\|\theta_0\|_{L^1}\leq
r}\left|B(\varphi(y, \cdot)\theta+\psi(y,\cdot)+\theta_0)\right|,
\end{align*}
where, for all $\theta_0\in L^1$ such that $\|\theta_0\|_{L^1}\leq r$,
\[
\left|B(\varphi(y, \cdot)\theta+\psi(y,\cdot)+\theta_0)\right|\leq \|B\|_\alpha e^{\frac{\|\varphi(y, \cdot)\theta+\psi(y,\cdot)\|_{L^1}}{\alpha}+\frac{r}{\alpha}}
\leq \|B\|_\alpha e^{\frac{c_0\|\theta\|_{L^1}+c_1+r}{\alpha}}.
\]
As a result, due to the positiveness of $\phi$ and to the fact that $a$ is an
even function, for all $\theta\in L^1$ one has
\begin{align*}
|(L_0B)(\theta)|\leq &\,
\frac{1}{r}e^{\frac{c_0\|\theta\|_{L^1}+c_1+r}{\alpha}}
\|B\|_\alpha\int_{\R^d}dx\int_{\R^d}dy\,a(x-y)e^{-k\phi(x-y)}|\theta(y)-\theta(x)|\\
\leq &\,
\frac{2}{r}e^{\frac{c_1+r}{\alpha}}\|a\|_{L^1}\|\theta\|_{L^1}e^{\frac{c_0\|\theta\|_{L^1}}{\alpha}}\|B\|_\alpha.
\end{align*}
Thus,
\begin{align*}
\|L_0B\|_{\alpha'}&=\sup_{\theta\in L^1}\left(e^{-\frac{1}{\alpha'}\|\theta\|_{L^1}}|(L_0B)(\theta)|\right)\\
&\leq
\frac{2}{r}e^{\frac{c_1+r}{\alpha}}\|a\|_{L^1}\|B\|_\alpha\sup_{\theta\in
L^1}\left(e^{-\left(\frac{1}{\alpha'}-\frac{c_0}{\alpha}\right)\|\theta\|_{L^1}}\|\theta\|_{L^1}\right),
\end{align*}
where the supremum is finite provided
$\frac{1}{\alpha'}-\frac{c_0}{\alpha}>0$. In such a situation, the use of the
inequality $xe^{-mx}\leq\frac{1}{em}$, $x\geq0$, $m>0$ leads for each $r>0$ to
\begin{equation*}
\|L_0B\|_{\alpha'}\leq \frac{2}{r}\|a\|_{L^1}e^{\frac{c_1+r}{\alpha}}\frac{\alpha\alpha'}{e(\alpha-c_0\alpha')}\|B\|_\alpha.
\end{equation*}
The required estimate of norms follows by minimizing the expression
$\frac{1}{r}e^{\frac{c_1+r}{\alpha}}$ in the parameter $r$, that is, $r=\alpha$.
\end{proof}

\begin{proof}[Proof of Proposition \ref{Proposition2}]
In Lemma \ref{Lemma2} replace $\varphi$ by $e^{-\phi}$ and $\psi$ by
$e^{-\phi}-1$, and consider $k=1$. Due to the positiveness and integrability
properties of $\phi$ one has $e^{-\phi}\leq 1$ and
$|e^{-\phi}-1|=1-e^{-\phi}\leq \phi\in L^1$, ensuring the conditions
to apply Lemma \ref{Lemma2}.
\end{proof}

\begin{remark}
Concerning the initial conditions considered in Theorem \ref{Th13}, observe
that, in particular, $B_0$ can be an entire GF $B_{\mu_0}$ on $L^1$ such that,
for some constants $\alpha_0,C>0$,
$|B_{\mu_0}(\theta)|\leq C\exp(\frac{\|\theta\|_{L^1}}{\alpha_0})$ for all
$\theta\in L^1$. In such a situation an additional analysis is need in order
to guarantee that for each $t$ the local solution $B_t$ given by Theorem
\ref{Th13} is a GF (corresponding to some measure). For more details see
e.g.~\cite{FKO11,KoKuOl02} and references therein.
\end{remark}

\section{Vlasov scaling}\label{Subsection3.2}

We proceed to investigate the Vlasov-type scaling proposed in
\cite{FKK10a} for generic continuous particle systems and
accomplished in \cite{BKKK11} for the Kawasaki dynamics. As
explained in both references, we start with a rescaling of an
initial correlation function $k_0$, denoted by $k_0^{(\eps)}$, $\eps
>0$, which has a singularity with respect to $\eps$ of the type
$k_0^{(\eps)}(\eta) \sim \eps^{-|\eta|} r_0(\eta)$,
$\eta\in\Gamma_0$, being $r_0$ a function independent of $\eps$. The
aim is to construct a scaling of the operator $L$ defined in
\eqref{Vl41}, $L_\eps$, $\eps
>0$, in such a way that the following two conditions are fulfilled. The first
one is that under the scaling $L\mapsto L_\eps$ the solution
$k^{(\varepsilon)}_t$, $t\geq 0$, to
\begin{equation*}
\frac \partial {\partial t}k_t^{(\varepsilon)}=\hat L_\varepsilon^*k^{(\varepsilon)}_t,\quad {k^{(\varepsilon)}_t}_{|t=0}=k_0^{(\varepsilon)}
\end{equation*}
preserves the order of the singularity with respect to $\eps$, that
is, $k_t^{(\eps)}(\eta) \sim \eps^{-|\eta|} r_t(\eta)$,
$\eta\in\Gamma_0$. The second condition is that the dynamics $r_0
\mapsto r_t$ preserves the Lebesgue-Poisson exponents, that is, if
$r_0$ is of the form $r_0=e_\lambda(\rho_0)$, then each $r_t$,
$t>0$, is of the same type, i.e., $r_t=e_\lambda(\rho_t)$, where
$\rho_t$ is a solution to a non-linear equation (called a
Vlasov-type equation).

The previous scheme was accomplished in \cite{BKKK11} through the
scale transformation $\phi\mapsto\varepsilon\phi$ of the operator
$L$, that is,
\[
(L_\varepsilon F)(\gamma):=
\sum_{x\in\gamma}\int_{\R^d}dy\,a(x-y)e^{-\varepsilon E(y,\gamma)}\left(F(\gamma\setminus\{x\}\cup\{y\})-F(\gamma)\right).
\]
As shown in \cite[Example 12]{FKK10a}, \cite{BKKK11}, the
corresponding Vlasov-type  equation is given by
\begin{equation}
\frac{\partial}{\partial
t}\rho_t(x)=(\rho_t*a)(x)e^{-(\rho_t*\phi)(x)}
-\rho_t(x)(a*e^{-(\rho_t*\phi)})(x),\quad x\in\R^d,\label{Vl42}
\end{equation}
where $*$ denotes the usual convolution of functions. Existence of
classical solutions $0\leq\rho_t\in L^\infty$ to \eqref{Vl42} has
been discussed in \cite{BKKK11}. Therefore, it is natural to
consider the same scaling, but in GF.

To proceed towards GF, we consider $k^{(\varepsilon)}_t$ defined as before and
$k^{(\varepsilon)}_{t,\mathrm{ren}}(\eta):=\varepsilon^{|\eta|}k^{(\varepsilon)}_t(\eta)$. In terms of GF, these yield
\[
B_t^{(\eps)}(\theta):=\int_{\Gamma_0} d\lambda(\eta)
e_\lambda(\theta,\eta)k_t^{(\eps)}(\eta),
\]
and
\[
B_{t,\mathrm{ren}}^{(\varepsilon)}(\theta):=\int_{\Gamma_0}d\lambda(\eta)\,e_\lambda(\theta,\eta)k_{t,\mathrm{ren}}^{(\varepsilon)}(\eta)=\int_{\Gamma_0}d\lambda(\eta)\,e_\lambda(\varepsilon\theta,\eta)k_t^{(\varepsilon)}(\eta)=B_t^{(\varepsilon)}(\varepsilon\theta),
\]
leading, as in \eqref{obtevol}, to the initial value problem
\begin{equation}
\frac{\partial}{\partial t}B_{t,\mathrm{ren}}^{(\varepsilon)}
=\tilde L_{\varepsilon, \mathrm{ren}}B_{t,\mathrm{ren}}^{(\varepsilon)},\quad
{B^{(\varepsilon)}_{t,\mathrm{ren}}}_{|t=0}=B_{0,\mathrm{ren}}^{(\varepsilon)}.\label{V12}
\end{equation}

\begin{proposition}\label{taquase3}
For all $\varepsilon>0$ and all $\theta\in L^1$, we have
\begin{align}
(\tilde L_{\varepsilon, \mathrm{ren}}B)(\theta)&=
\int_{\R^d}dx\int_{\R^d}dy\,a(x-y)e^{-\varepsilon\phi(x-y)}(\theta(y)-\theta(x))\nonumber
\\ &\quad \times \delta B\left(\theta e^{-\varepsilon\phi(y-\cdot)} +\frac{e^{-\varepsilon\phi(y-\cdot)}-1}{\varepsilon};x\right).\label{taquase4}
\end{align}
\end{proposition}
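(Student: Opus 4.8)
The plan is to read off $\tilde L_{\varepsilon,\mathrm{ren}}$ directly from the renormalization relation $B_{t,\mathrm{ren}}^{(\varepsilon)}(\theta)=B_t^{(\varepsilon)}(\varepsilon\theta)$ recorded above, by differentiating in $t$ and carefully tracking the powers of $\varepsilon$. Since $k_t^{(\varepsilon)}$ solves the scaled equation with $\hat L_\varepsilon^\ast$, the same computation as in \eqref{obtevol} applied to $L_\varepsilon$ shows that $B_t^{(\varepsilon)}$ evolves under the operator $\tilde L_\varepsilon$ obtained from \eqref{LtildeGL} by the substitution $\phi\mapsto\varepsilon\phi$,
\begin{align*}
(\tilde L_\varepsilon B)(\psi)&=\int_{\R^d}dx\int_{\R^d}dy\,a(x-y)e^{-\varepsilon\phi(x-y)}(\psi(y)-\psi(x))\\
&\quad\times\delta B\bigl(\psi\, e^{-\varepsilon\phi(y-\cdot)}+e^{-\varepsilon\phi(y-\cdot)}-1;x\bigr).
\end{align*}
First I would differentiate the renormalization relation to obtain $\frac{\partial}{\partial t}B_{t,\mathrm{ren}}^{(\varepsilon)}(\theta)=(\tilde L_\varepsilon B_t^{(\varepsilon)})(\varepsilon\theta)$, so that the whole task reduces to evaluating $\tilde L_\varepsilon$ at the rescaled argument $\varepsilon\theta$ and re-expressing the outcome through $B_{t,\mathrm{ren}}^{(\varepsilon)}$.

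The second ingredient is a chain rule for the first variational derivative under the linear scaling $\theta\mapsto\varepsilon\theta$. Writing $B_{\mathrm{ren}}(\theta)=B(\varepsilon\theta)$ and using \eqref{taquase10}, one has $dB_{\mathrm{ren}}(\theta;\eta)=\frac{\partial}{\partial z}B(\varepsilon\theta+z\varepsilon\eta)\big|_{z=0}=\varepsilon\,dB(\varepsilon\theta;\eta)$ by linearity of the differential in its direction, and comparing the integral kernels then gives the identity $\delta B(\varepsilon\theta;x)=\frac{1}{\varepsilon}\,\delta B_{\mathrm{ren}}(\theta;x)$. This is precisely the identity I would apply to convert $\delta B_t^{(\varepsilon)}$ into $\delta B_{t,\mathrm{ren}}^{(\varepsilon)}$.

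Evaluating $\tilde L_\varepsilon$ at $\varepsilon\theta$ produces the prefactor $(\varepsilon\theta(y)-\varepsilon\theta(x))=\varepsilon(\theta(y)-\theta(x))$ together with the argument $\varepsilon\theta\,e^{-\varepsilon\phi(y-\cdot)}+e^{-\varepsilon\phi(y-\cdot)}-1$ inside $\delta B_t^{(\varepsilon)}$. The key algebraic move is to pull a factor $\varepsilon$ out of this argument,
\[
\varepsilon\theta\,e^{-\varepsilon\phi(y-\cdot)}+e^{-\varepsilon\phi(y-\cdot)}-1=\varepsilon\Bigl(\theta\, e^{-\varepsilon\phi(y-\cdot)}+\tfrac{e^{-\varepsilon\phi(y-\cdot)}-1}{\varepsilon}\Bigr),
\]
which is exactly the shifted profile in \eqref{taquase4}. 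Applying the chain-rule identity with this argument turns $\delta B_t^{(\varepsilon)}(\varepsilon(\cdots);x)$ into $\frac{1}{\varepsilon}\,\delta B_{t,\mathrm{ren}}^{(\varepsilon)}(\cdots;x)$, and the factor $\varepsilon$ from the prefactor cancels the factor $\frac{1}{\varepsilon}$ from the derivative, leaving precisely \eqref{taquase4}. The only genuine subtlety is the bookkeeping of the powers of $\varepsilon$: in particular one must verify that the $-1$ in the shift is not itself rescaled, so that it is the combination $\tfrac{e^{-\varepsilon\phi(y-\cdot)}-1}{\varepsilon}$, rather than $e^{-\varepsilon\phi(y-\cdot)}-1$, that survives the factorization. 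That the coefficients remain integrable, so that $\tilde L_{\varepsilon,\mathrm{ren}}$ is well defined on the relevant scale, then follows exactly as in Lemma \ref{Lemma2}.
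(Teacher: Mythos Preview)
Your argument is correct but proceeds differently from the paper. The paper works at the correlation-function level: it first computes $(\hat L_{\varepsilon,\mathrm{ren}}e_\lambda(\theta))(\eta)=\varepsilon^{-|\eta|}\hat L_\varepsilon(e_\lambda(\varepsilon\theta,\cdot))(\eta)$ explicitly (by the analogue of the calculation in \cite[Subsection 4.2.1]{FKO05}), then integrates against $k$, applies the Mecke-type identity to turn $\sum_{x\in\eta}$ into an integral with $k(\eta\cup\{x\})$, and finally invokes \cite[Proposition 11]{KoKuOl02} to recognise $\int_{\Gamma_0}k(\eta\cup\{x\})e_\lambda(f,\eta)\,d\lambda(\eta)$ as $\delta B(f;x)$. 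You instead stay entirely at the GF level: you accept \eqref{LtildeGL} for $\tilde L$, hence for $\tilde L_\varepsilon$ via $\phi\mapsto\varepsilon\phi$, and then read off $\tilde L_{\varepsilon,\mathrm{ren}}$ by conjugating with the dilation $\theta\mapsto\varepsilon\theta$, using the chain rule $\delta B(\varepsilon\psi;x)=\varepsilon^{-1}\delta B_{\mathrm{ren}}(\psi;x)$. Your route is shorter and makes transparent that the renormalization is nothing but a change of variables on the GF side; the paper's route is more self-contained in that it re-derives the action on coherent states and makes the $\varepsilon$-dependence visible already at the $\hat L$ level, which is the form one needs if one wants to compare directly with the correlation-function hierarchy in \cite{BKKK11,FKK10a}.
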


\begin{proof} Since
\[
(\tilde L_{\varepsilon, \mathrm{ren}}B)(\theta)=
\int_{\Gamma_0}d\lambda(\eta)\,(\hat L_{\varepsilon, \mathrm{ren}}e_\lambda(\theta))(\eta)k(\eta),
\]
first we have to calculate
$(\hat L_{\varepsilon, \mathrm{ren}}e_\lambda(\theta))(\eta):=
\varepsilon^{-|\eta|}\hat L_\varepsilon(e_\lambda(\varepsilon\theta,\eta))$,
$\hat L_\varepsilon=K^{-1}L_\varepsilon K$ cf.~\cite{FKK10a}. Similar
calculations done in \cite[Subsection 4.2.1]{FKO05} show
\begin{align*}
(\hat L_{\varepsilon, \mathrm{ren}}e_\lambda(\theta))(\eta)
=& \,\sum_{x\in\eta}\int_{\R^d}dy\,a(x-y)e^{-\varepsilon\phi(x-y)}(\theta(y)-\theta(x))\\
&\quad\times e_\lambda\left(\theta
e^{-\varepsilon\phi(y-\cdot)}+\frac{e^{-\varepsilon\phi(y-\cdot)}-1}{\varepsilon},\eta\setminus\{x\}\right),
\end{align*}
and thus, using the relation between variational derivatives derived in
\cite[Proposition 11]{KoKuOl02}, one finds
\begin{align*}
(\tilde L_{\varepsilon, \mathrm{ren}}B)(\theta)=& \,\int_{\Gamma_0}d\lambda(\eta)\,k(\eta)\sum_{x\in\eta}\int_{\R^d}dy\,a(x-y)e^{-\varepsilon\phi(x-y)}(\theta(y)-\theta(x))\\
&\quad\times e_\lambda\left(\theta e^{-\varepsilon\phi(y-\cdot)}+\frac{e^{-\varepsilon\phi(y-\cdot)}-1}{\varepsilon},\eta\setminus\{x\}\right)\\
=& \,\int_{\R^d}dx\int_{\R^d}dy\,a(x-y)e^{-\varepsilon\phi(x-y)}(\theta(y)-\theta(x))\\&\quad\int_{\Gamma_0}d\lambda(\eta)\,k(\eta\cup\{x\})e_\lambda\left(\theta e^{-\varepsilon\phi(y-\cdot)}+\frac{e^{-\varepsilon\phi(y-\cdot)}-1}{\varepsilon},\eta\right)\\
=&
\,\int_{\R^d}dx\int_{\R^d}dy\,a(x-y)e^{-\varepsilon\phi(x-y)}(\theta(y)-\theta(x))\\&\quad\times
\delta B\left(\theta e^{-\varepsilon\phi(y-\cdot)}
+\frac{e^{-\varepsilon\phi(y-\cdot)}-1}{\varepsilon};x\right).\qedhere
\end{align*}
\end{proof}

\begin{proposition}\label{estimativas} (i) If $B\in\mathcal{E}_{\alpha}$ for
some $\alpha >0$, then, for all $\theta\in L^1$,
$(\tilde L_{\varepsilon,\mathrm{ren}}B)(\theta)$ converges as $\eps$ tends to
zero to
\[
(\tilde L_V B)(\theta):=
\int_{\R^d}dx\int_{\R^d}dy\,a(x-y)(\theta(y)-\theta(x))
\delta B(\theta -\phi(y-\cdot);x).
\]
(ii) Let $\alpha_0>\alpha>0$ be given. If $B\in\mathcal{E}_{\alpha''}$ for some
$\alpha''\in (\alpha, \alpha_0]$, then $\bigl\{\tilde L_{\varepsilon, \mathrm{ren}}B,\tilde{L}_V B\bigr\} \subset\mathcal{E}_{\alpha'}$ for all
$\alpha\leq\alpha'<\alpha''$, and we have
\[
\|\tilde{L}_{\#}B\|_{\alpha'}\leq
2\|a\|_{L^1}\frac{\alpha_0}{(\alpha''-\alpha')}
e^{\frac{\|\phi\|_{L^1}}{\alpha}}\|B\|_{\alpha''}
\]
where $\tilde{L}_{\#}=\tilde{L}_{\varepsilon, \mathrm{ren}}$ or
$\tilde{L}_{\#}=\tilde{L}_V$.
\end{proposition}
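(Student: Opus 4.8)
The plan is to handle the two parts by different means: part (ii) reduces almost verbatim to Lemma \ref{Lemma2}, while part (i) is a dominated-convergence argument whose only non-routine ingredient is the continuity of the variational derivative in its base point.

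For part (ii) I would apply Lemma \ref{Lemma2} to each operator with a suitable reading of $\varphi,\psi,k$. For $\tilde L_{\eps,\mathrm{ren}}$, given by \eqref{taquase4}, take $\varphi(y,\cdot)=e^{-\eps\phi(y-\cdot)}$, $\psi(y,\cdot)=\frac{e^{-\eps\phi(y-\cdot)}-1}{\eps}$ and $k=\eps$; for $\tilde L_V$ take the constant $\varphi(y,\cdot)\equiv1$, $\psi(y,\cdot)=-\phi(y-\cdot)$ and $k=0$. By the positivity of $\phi$ and the elementary bound $0\le 1-e^{-\eps s}\le\eps s$ $(s\ge0)$, in both cases $\|\varphi(y,\cdot)\|_{L^\infty}\le1=:c_0$ and $\|\psi(y,\cdot)\|_{L^1}\le\|\phi\|_{L^1}=:c_1$, uniformly in $y$ and (crucially for the scaling) in $\eps$. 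Lemma \ref{Lemma2} with source index $\alpha''$ and target index $\alpha'$—its hypothesis $c_0\alpha'<\alpha''$ being exactly $\alpha'<\alpha''$—then yields $\|\tilde L_\# B\|_{\alpha'}\le 2e^{\|\phi\|_{L^1}/\alpha''}\|a\|_{L^1}\frac{\alpha'}{\alpha''-\alpha'}\|B\|_{\alpha''}$, and the stated estimate follows by the crude over-estimates $e^{\|\phi\|_{L^1}/\alpha''}\le e^{\|\phi\|_{L^1}/\alpha}$ (as $\alpha\le\alpha''$) and $\alpha'\le\alpha_0$, which render the constant uniform over the whole scale, as the Ovsjannikov scheme requires.

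For part (i) I would apply dominated convergence on $\R^d\times\R^d$. First, for a.a.\ fixed $y$, the argument $\zeta_\eps:=\theta e^{-\eps\phi(y-\cdot)}+\frac{e^{-\eps\phi(y-\cdot)}-1}{\eps}$ converges in $L^1$ to $\zeta_0:=\theta-\phi(y-\cdot)$: the first summand tends to $\theta$ with dominant $|\theta|$ (using $|e^{-\eps\phi}-1|\le1$), and the second tends to $-\phi(y-\cdot)$ with dominant $\phi(y-\cdot)\in L^1$, via the pointwise limit $\frac{e^{-\eps s}-1}{\eps}\to-s$ and the bound $\bigl|\frac{e^{-\eps s}-1}{\eps}\bigr|\le s$. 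Since $B$ is entire, the first variational derivative $\delta B(\theta_0;\cdot)$ depends continuously on $\theta_0$ as an $L^\infty$-valued map on $L^1$, so $\delta B(\zeta_\eps;x)\to\delta B(\zeta_0;x)$ for a.a.\ $x$; together with $e^{-\eps\phi(x-y)}\to1$ this gives pointwise convergence of the $(x,y)$-integrand to the integrand of $\tilde L_V B$. To dominate, I would use \eqref{2011!} and the norm of $\mathcal{E}_\alpha$: for any fixed $r>0$, $\|\delta B(\zeta_\eps;\cdot)\|_{L^\infty}\le\frac1r\|B\|_\alpha e^{(\|\zeta_\eps\|_{L^1}+r)/\alpha}$, and $\|\zeta_\eps\|_{L^1}\le\|\theta\|_{L^1}+\|\phi\|_{L^1}$ makes this bound independent of $\eps,x,y$. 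The resulting dominant $C\,a(x-y)|\theta(y)-\theta(x)|$ is integrable because $\int\!\int a(x-y)|\theta(y)-\theta(x)|\,dx\,dy\le2\|a\|_{L^1}\|\theta\|_{L^1}$ by the evenness of $a$, so dominated convergence passes $\eps\to0$ inside the integral and delivers $(\tilde L_{\eps,\mathrm{ren}}B)(\theta)\to(\tilde L_V B)(\theta)$.

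The step I expect to be the crux is the pointwise convergence $\delta B(\zeta_\eps;x)\to\delta B(\zeta_0;x)$, which rests on the continuity of the first variational derivative in its base point; this is where the entireness of $B$ is genuinely used (the Fréchet derivative of an analytic functional is itself continuous). Everything else is bookkeeping built on the single elementary inequality $\bigl|\frac{e^{-\eps s}-1}{\eps}\bigr|\le s$, which simultaneously supplies the pointwise limits and the $\eps$-uniform domination.
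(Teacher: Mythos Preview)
Your argument is correct. Part (ii) is exactly the paper's route: it too reads off the estimate from Lemma~\ref{Lemma2} with $\varphi=e^{-\eps\phi}$, $\psi=(e^{-\eps\phi}-1)/\eps$, $k=\eps$ (resp.\ $\varphi\equiv1$, $\psi=-\phi$, $k=0$) and then coarsens the constants to $\alpha_0$ and $e^{\|\phi\|_{L^1}/\alpha}$.

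For part (i) your proof and the paper's differ in how the inner convergence $\delta B(\zeta_\eps;x)\to\delta B(\zeta_0;x)$ is obtained. You argue abstractly: $\zeta_\eps\to\zeta_0$ in $L^1$ plus the fact that the Fr\'echet derivative of an entire functional is itself continuous (indeed analytic) as a map $L^1\to L^\infty$. The paper instead makes this continuity explicit: it uses the identity $\delta B(\theta_1+\theta_2;x)=\int_{\Gamma_0}\delta^{|\eta|+1}B(\theta_1;\eta\cup\{x\})\,e_\lambda(\theta_2,\eta)\,d\lambda(\eta)$ from \cite[Proposition~11]{KoKuOl02} with $\theta_1=\theta-\phi(y-\cdot)$ and $\theta_2=f_\eps:=\zeta_\eps-\zeta_0$, bounds the higher-order kernels by \eqref{2011!}--\eqref{Duarte}, and applies dominated convergence on $\Gamma_0$ (the series $\sum (n{+}1)(e/r)^{n+1}(\|\theta\|_{L^1}+2\|\phi\|_{L^1})^n$ converges for $r$ large). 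Your route is shorter and conceptually cleaner, but it imports a general Banach-space holomorphy fact; the paper's route is self-contained within the variational-derivative calculus already set up in Section~\ref{Subsection2.2}. The outer dominated-convergence step over $\R^d\times\R^d$, with dominant $C\,a(x-y)|\theta(y)-\theta(x)|$ obtained from \eqref{2011!} and the uniform bound $\|\zeta_\eps\|_{L^1}\le\|\theta\|_{L^1}+\|\phi\|_{L^1}$, is the same in both.
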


\begin{proof} (i) To prove this result we first analyze the pointwise
convergence of the variational derivative \eqref{taquase4} appearing in
$\tilde L_{\varepsilon, \mathrm{ren}}$. For this purpose we will
use the relation between variational derivatives derived in
\cite[Proposition 11]{KoKuOl02}, i.e.,
\[
\delta B(\theta_1+\theta_2;x)=\int_{\Gamma_0}d\lambda(\eta)\,
\delta^{|\eta|+1} B(\theta_1;\eta\cup\{x\})e_\lambda(\theta_2,\eta),\quad
a.a.\,x\in\R^d,\theta_1,\theta_2\in L^1,
\]
which allows to rewrite \eqref{taquase4} as
\begin{align}
&\delta B\left(\theta e^{-\varepsilon\phi (y -\cdot)}+
\frac{e^{-\varepsilon\phi (y -\cdot)}-1}{\varepsilon};x\right)\nonumber \\
=& \,\int_{\Gamma_0}\,d\lambda(\eta)\,
\delta^{|\eta|+1} B(\theta -\phi (y -\cdot);\eta\cup\{x\})\label{taquase7}\\
&\qquad\times
e_\lambda\left(\theta\left(e^{-\varepsilon\phi(y-\cdot)}-1\right) +
\frac{e^{-\varepsilon\phi(y-\cdot)}-1}{\varepsilon}+\phi(y-\cdot),\eta\right),\nonumber
\end{align}
for a.a.~$x,y\in\R^d$. Concerning the function
\[
f_\varepsilon:=f_\varepsilon(\theta,\phi,y):=\theta\left(e^{-\varepsilon\phi (y -\cdot )}-1\right)+\frac{e^{-\varepsilon\phi (y -\cdot )}-1}{\varepsilon}+\phi(y-\cdot)
\]
which appears in \eqref{taquase7}, for a.a.~$y\in\R^d$, one clearly
has $\lim_{\varepsilon\to 0}f_\varepsilon= 0$ a.e.~in $\R^d$. By
definition \eqref{taquase6}, the latter implies that
$e_\lambda(f_\varepsilon)$ converges $\lambda$-a.e.~to
$e_\lambda(0)$. Moreover, for the whole integrand function in
\eqref{taquase7}, estimates \eqref{2011!}, \eqref{Duarte} yield for
any $r>0$ and $\lambda$-a.a.~$\eta\in\Gamma_0$,
\begin{align*}
&\left|\delta^{|\eta|+1} B(\theta -\phi (y -\cdot);\eta\cup\{x\})e_\lambda(f_\varepsilon,\eta)\right|\\
\leq &\, \left\|\delta^{|\eta|+1} B(\theta -\phi (y -\cdot);\cdot)\right\|_{L^\infty(\R^{d(|\eta|+1)})} e_\lambda(|f_\varepsilon|,\eta)\\
\leq &\,
(|\eta|+1)!\left(\frac{e}{r}\right)^{|\eta|+1}e_\lambda(|f_\varepsilon|,\eta)\sup_{\|\theta_0\|_{L^1}\leq
r}
|B(\theta -\phi (y -\cdot)+\theta_0)|\\
\leq &\, (|\eta|+1)!\left(\frac{e}{r}\right)^{|\eta|+1}
e_\lambda(|\theta|+2|\phi(y-\cdot)|,\eta)e^{\frac{\|\theta-\phi(y-\cdot)\|_{L^1}+r}{\alpha}}\|B\|_\alpha
\end{align*}
with
\[
\int_{\Gamma_0}d\lambda(\eta)\,(|\eta|+1)!\left(\frac{e}{r}\right)^{|\eta|+1}
e_\lambda(|\theta|+2|\phi(y-\cdot)|,\eta)=
\sum_{n=0}^\infty(n+1)\left(\frac{e}{r}\right)^{n+1}\bigl(\|\theta\|_{L^1}+
2\|\phi\|_{L^1}\bigr)^n
\]
being finite for any $r> e (\|\theta\|_{L^1}+ 2\|\phi\|_{L^1})$.

As a result, by an application of the Lebesgue dominated convergence theorem
we have proved that, for a.a.~$x,y\in\R^d$, \eqref{taquase7} converges as
$\varepsilon$ tends to zero to
\[
\int_{\Gamma_0}\,d\lambda(\eta)\,
\delta^{|\eta|+1} B(\theta -\phi (y -\cdot);\eta\cup\{x\})
e_\lambda(0,\eta)= \delta B(\theta -\phi (y -\cdot);x).
\]
In addition, for the integrand function which appears in
$(\tilde L_{\varepsilon, \mathrm{ren}}B)(\theta)$ we have
\begin{align*}
&\left|a(x-y)e^{-\varepsilon\phi(x-y)}(\theta(y)-\theta(x))
\delta B\left(\theta e^{-\varepsilon\phi(y-\cdot)} +\frac{e^{-\varepsilon\phi(y-\cdot)}-1}{\varepsilon};x\right)\right|\\
\leq &\, \frac{e}{\alpha}a(x-y)|\theta(y)-\theta(x)|
\|B\|_\alpha\exp\left(\frac{1}{\alpha}\|\theta\|_{L^1}+\frac{1}{\alpha}\|\phi\|_{L^1}\right),
\end{align*}
for all $\varepsilon >0$ and a.a.~$x,y\in\R^d$, leading through a second
application of the Lebesgue dominated convergence theorem to the required
limit.

(ii) In Lemma \ref{Lemma2} replace $\varphi$ by
$e^{-\varepsilon\phi}$, $\psi$ by
$\frac{e^{-\varepsilon\phi}-1}{\varepsilon}$, and $k$ by
$\varepsilon$. Arguments similar to prove Proposition
\ref{Proposition2} complete the proof for $\tilde{L}_{\varepsilon,
\mathrm{ren}}$. A similar proof holds for $\tilde{L}_V$.
\end{proof}

Proposition \ref{estimativas} (ii) provides similar estimate of norms for
$\tilde{L}_{\varepsilon, \mathrm{ren}}$, $\eps>0$, and the limiting mapping
$\tilde{L}_V$. According to the Ovsjannikov-type result used to prove Theorem
\ref{Th13}, this means that given any
$B_{0,V},B_{0,\mathrm{ren}}^{(\eps)}\in\mathcal{E}_{\alpha_0}$, $\eps>0$, for
each $\alpha\in\left(0,\alpha_0\right)$ there is a $T>0$ such that there is a
unique solution
$B_{t,\mathrm{ren}}^{(\eps)}:\left[0,T\right)\to\mathcal{E}_\alpha$,
$\eps>0$, to each initial value problem \eqref{V12} and a unique solution
$B_{t,V}:\left[0,T\right)\to\mathcal{E}_\alpha$ to the initial value problem
\begin{equation}
\frac{\partial}{\partial t}B_{t,V}=\tilde L_{V}B_{t,V},\quad
{B_{t,V}}_{|t=0}=B_{0,V}.\label{V19}
\end{equation}
In other words, independent of the initial value problem under consideration,
the solutions obtained are defined on the same time-interval and with values in
the same Banach space. For more details see e.g.~Theorem 2.5 and its proof in
\cite{FKO11}. Therefore, it is natural to analyze under which conditions the
solutions to \eqref{V12} converge to the solution to \eqref{V19}. This
follows from a general result presented in \cite{FKO11} (Theorem 4.3).
However, to proceed to an application of this general result one needs the
following estimate of norms.

\begin{proposition}\label{Prop3}
Assume that $0\leq\phi\in L^1\cap L^\infty$ and let $\alpha_0>\alpha>0$ be
given. Then, for all $B\in\mathcal{E}_{\alpha''}$,
$\alpha''\in (\alpha, \alpha_0]$, the following estimate holds
 \begin{align*}
&\|\tilde L_{\varepsilon, \mathrm{ren}}B-\tilde L_V B\|_{\alpha'}\\
\leq &\,
2\eps\|a\|_{L^1}\|\phi\|_{L^\infty}\frac{e\alpha_0}{\alpha}\|B\|_{\alpha''}e^{\frac{\|\phi\|_{L^1}}{\alpha}}\biggl(
\left(2e\|\phi\|_{L^1}+\frac{\alpha_0}{e}\right)\frac{1}{\alpha''-\alpha'}+
\frac{8\alpha_0^2}{(\alpha''-\alpha')^2}\biggr)
\end{align*}
for all $\alpha'$ such that $\alpha\leq\alpha'<\alpha''$ and all $\eps>0$.
\end{proposition}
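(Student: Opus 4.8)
The plan is to estimate both operators on a common footing: using the explicit form \eqref{taquase4} of $\tilde L_{\varepsilon,\mathrm{ren}}$ and the definition of $\tilde L_V$ from Proposition \ref{estimativas}(i), I would write their difference as a single double integral over $\R^d\times\R^d$ and split the integrand. Abbreviating $g_\varepsilon:=\theta e^{-\varepsilon\phi(y-\cdot)}+\frac{e^{-\varepsilon\phi(y-\cdot)}-1}{\varepsilon}$ and $g_V:=\theta-\phi(y-\cdot)$, the difference of integrands is $e^{-\varepsilon\phi(x-y)}\,\delta B(g_\varepsilon;x)-\delta B(g_V;x)$, which I would decompose as
\[
\bigl(e^{-\varepsilon\phi(x-y)}-1\bigr)\delta B(g_\varepsilon;x)+\bigl(\delta B(g_\varepsilon;x)-\delta B(g_V;x)\bigr),
\]
yielding $\tilde L_{\varepsilon,\mathrm{ren}}B-\tilde L_V B=I+II$, where $I$ absorbs the degeneration of the prefactor $e^{-\varepsilon\phi(x-y)}$ to $1$ and $II$ absorbs the change in the argument of the variational derivative.

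The quantitative heart of the matter is the smallness of $f_\varepsilon:=g_\varepsilon-g_V$, the function already introduced in \eqref{taquase7}. Using $1-e^{-t}\le t$ and $0\le e^{-t}-1+t\le t^2/2$ for $t\ge0$, together with $\phi\in L^\infty$ (so that $\phi(y-\cdot)^2\le\|\phi\|_{L^\infty}\phi(y-\cdot)$ pointwise), I would prove $|f_\varepsilon|\le\varepsilon\|\phi\|_{L^\infty}\bigl(|\theta|+\tfrac12\phi(y-\cdot)\bigr)$ and hence $\|f_\varepsilon\|_{L^1}\le\varepsilon\|\phi\|_{L^\infty}\bigl(\|\theta\|_{L^1}+\tfrac12\|\phi\|_{L^1}\bigr)$. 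This is precisely where the extra hypothesis $\phi\in L^\infty$ (absent in Proposition \ref{Proposition2}) is used, and it supplies both the global factor $\varepsilon$ and the factor $\|\phi\|_{L^\infty}$ in the final estimate.

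For term $II$ I would isolate a single factor of $f_\varepsilon$ by the fundamental theorem of calculus along the segment from $g_V$ to $g_\varepsilon$ (equivalently, by keeping only the leading contribution of the expansion \eqref{taquase7}),
\[
\delta B(g_\varepsilon;x)-\delta B(g_V;x)=\int_0^1 ds\int_{\R^d}dz\,\delta^2B(g_V+sf_\varepsilon;x,z)\,f_\varepsilon(z),
\]
so that $|\delta B(g_\varepsilon;x)-\delta B(g_V;x)|\le\|f_\varepsilon\|_{L^1}\sup_{0\le s\le1}\|\delta^2B(g_V+sf_\varepsilon;\cdot)\|_{L^\infty(\R^{2d})}$. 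Bounding $\delta^2 B$ by \eqref{Duarte} with $n=2$ and a free parameter $r$, and noting the convexity bound $\|g_V+sf_\varepsilon\|_{L^1}\le\|\theta\|_{L^1}+\|\phi\|_{L^1}$ (since both $\|g_V\|_{L^1}$ and $\|g_\varepsilon\|_{L^1}$ are $\le\|\theta\|_{L^1}+\|\phi\|_{L^1}$), gives a $\theta$-prefactor $\|\theta\|_{L^1}\bigl(\|\theta\|_{L^1}+\tfrac12\|\phi\|_{L^1}\bigr)e^{\|\theta\|_{L^1}/\alpha''}$ after integrating against $a(x-y)|\theta(y)-\theta(x)|$ (which contributes $2\|a\|_{L^1}\|\theta\|_{L^1}$). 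Term $I$ is handled by $1-e^{-\varepsilon\phi(x-y)}\le\varepsilon\|\phi\|_{L^\infty}$ and the first-order estimate \eqref{2011!} for $\delta B(g_\varepsilon;x)$ (optimized at $r=\alpha''$), leaving the lower-degree prefactor $\|\theta\|_{L^1}e^{\|\theta\|_{L^1}/\alpha''}$. Passing to $\|\cdot\|_{\alpha'}=\sup_\theta e^{-\|\theta\|_{L^1}/\alpha'}|\cdot|$ and using $\sup_{t\ge0}t^k e^{-mt}=(k/(em))^k$ with $m=\tfrac1{\alpha'}-\tfrac1{\alpha''}=\tfrac{\alpha''-\alpha'}{\alpha'\alpha''}$, the degree-two part of $II$ produces the double pole and the degree-one parts of $I$ and $II$ the single pole; taking $r=2\alpha''$ and bounding $\alpha',\alpha''\le\alpha_0$, $\alpha''\ge\alpha$, $e^{\cdot/\alpha''}\le e^{\cdot/\alpha}$ collects everything under the stated common prefactor.

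The delicate point is extracting the full order $\varepsilon$ while at the same time landing in $\mathcal{E}_{\alpha'}$ with a blow-up in $(\alpha''-\alpha')$ and not in $\alpha''$. The free parameter $r$ in \eqref{Duarte} plays two roles that must be kept apart: it has to remain a bounded multiple of $\alpha''$ so that $e^{r/\alpha''}$ does not corrupt the exponential rate (and so that the $\delta^2B$ bound contributes only $\alpha''^{-2}$), whereas the genuine singularity as $\alpha'\uparrow\alpha''$ must come entirely from the final $\sup_\theta$ against $e^{-m\|\theta\|_{L^1}}$, giving $\alpha''^{-2}m^{-2}=\alpha'^2/(\alpha''-\alpha')^2\le\alpha_0^2/(\alpha''-\alpha')^2$. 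The naive route of keeping the whole Taylor tail of $\delta B(g_\varepsilon;x)$ forces $r>e\|\theta\|_{L^1}$ for convergence and thereby ruins the rate; isolating exactly one factor of $f_\varepsilon$ is what makes the linear $\varepsilon$-order and the correct $(\alpha''-\alpha')^{-2}$ pole emerge simultaneously, and is the step I expect to require the most care.
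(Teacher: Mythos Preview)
Your plan is correct and follows essentially the same route as the paper: split the difference into a prefactor term and a variational-derivative increment, control the increment via the mean-value identity and the second-order bound \eqref{Duarte}, and finish with the $\sup_\theta$ step using $t^k e^{-mt}\le(k/(em))^k$. The only differences are cosmetic---the paper attaches the factor $(1-e^{-\varepsilon\phi(x-y)})$ to $\delta B(g_V;\cdot)$ rather than to $\delta B(g_\varepsilon;\cdot)$, uses the slightly cruder bound $\|f_\varepsilon\|_{L^1}\le\varepsilon\|\phi\|_{L^\infty}(\|\theta\|_{L^1}+\|\phi\|_{L^1})$, and takes $r=\alpha''$ instead of $r=2\alpha''$---none of which affects the argument.
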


\begin{proof} First we observe that
\begin{align*}
&\left|(\tilde L_{\varepsilon, \mathrm{ren}}B)(\theta)-(\tilde L_V B)(\theta)\right|\leq \int_{\R^d}dx\int_{\R^d}dy\,a(x-y)\left|\theta(y)-\theta(x)\right|\\
&\times\left|e^{-\varepsilon\phi(x-y)}\delta B\left(\theta
e^{-\varepsilon\phi (y -\cdot )}+\frac{e^{-\varepsilon\phi (y -\cdot
)}-1}{\varepsilon};x\right)-\delta B\left(\theta -\phi (y -\cdot);x\right)\right|\\
\end{align*}
with
\begin{align}
&\left|e^{-\varepsilon\phi(x-y)}\delta B\left(\theta
e^{-\varepsilon\phi (y -\cdot )}+\frac{e^{-\varepsilon\phi (y -\cdot
)}-1}{\varepsilon};x\right)-\delta B\left(\theta -\phi (y -\cdot);x\right)\right|\nonumber\\
\leq &\, \left|\delta B\left(\theta e^{-\varepsilon\phi (y -\cdot
)}+\frac{e^{-\varepsilon\phi (y -\cdot
)}-1}{\varepsilon};x\right)-\delta B\left(\theta -\phi (y -\cdot);x\right)\right|\label{Ola}\\
&+\left(1-e^{-\varepsilon\phi(x-y)}\right)\left|\delta B\left(\theta
-\phi (y -\cdot);x\right)\right|.\nonumber
\end{align}
In order to estimate (\ref{Ola}), given any
$\theta_0,\theta_1,\theta_2\in L^1$, let us consider the function
$C_{\theta_0,\theta_1,\theta_2}(t)=dB\left(t\theta_1+(1-t)\theta_2;\theta_0\right)$, $t\in\left[0,1\right]$, where $dB$ is the first order differential of $B$,
defined in \eqref{taquase10}. One has
\begin{align*}
\frac{\partial}{\partial t} C_{\theta_0,\theta_1,\theta_2}(t) =&
\,\frac{\partial}{\partial
s} C_{\theta_0,\theta_1,\theta_2}(t+s)\Bigr|_{s=0}\\
=& \,\frac{\partial}{\partial
s} dB\bigl(\theta_2+t(\theta_1-\theta_2)+s(\theta_1-\theta_2);\theta_0\bigr)\Bigr|_{s=0}\\
=& \,\frac{\partial^2}{\partial s_1\partial s_2}B\bigl(\theta_2+t(\theta_1-\theta_2)+s_1(\theta_1-\theta_2)+s_2\theta_0\bigr)\Bigr|_{s_1=s_2=0}\\
=&
\,\int_{\mathbb{R}^d}dx\int_{\mathbb{R}^d}dy\,(\theta_1(x)-\theta_2(x))\theta_0(y)\,\delta^2
B(\theta_2+t(\theta_1-\theta_2);x,y),
\end{align*}
leading to
\begin{align*}
&\bigl|dB(\theta_1;\theta_0)-dB(\theta_2;\theta_0)\bigr|\\
=& \,\bigl|C_{\theta_0,\theta_1,\theta_2}(1)-C_{\theta_0,\theta_1,\theta_2}(0)\bigr|\\
\leq &\,
\max_{t\in[0,1]}\int_{\mathbb{R}^d}dx\int_{\mathbb{R}^d}dy\,
\left|\theta_1(x)-\theta_2(x)\right|\left|\theta_0(y)\right|
\left|\delta^2 B(\theta_2+t(\theta_1-\theta_2);x,y)\right|\\
\leq &\,
\|\theta_1-\theta_2\|_{L^1}\|\theta_0\|_{L^1}\max_{t\in[0,1]}\|\delta^2
B(\theta_2+t(\theta_1-\theta_2);\cdot)\|_{L^\infty(\R^{2d})},
\end{align*}
where, through estimate \eqref{Duarte} with $r=\alpha''$,
\[
\|\delta^2 B(\theta_2+t(\theta_1-\theta_2);\cdot)\|_{L^\infty(\R^{2d})}
\leq 2\frac{e^3}{\alpha''^2}\|B\|_{\alpha''}\exp\left(\frac{\|\theta_2+t(\theta_1-\theta_2)\|_{L^1}}{\alpha''}\right).
\]
As a result,
\begin{align*}
&\bigl|dB(\theta_1;\theta_0)-dB(\theta_2;\theta_0)\bigr|\\
\leq &\,
2\frac{e^3}{\alpha''^2}\|\theta_1-\theta_2\|_{L^1}\|\theta_0\|_{L^1}\|B\|_{\alpha''}\max_{t\in[0,1]}\exp\left(\frac{t\|\theta_1\|_{L^1}+(1-t)\|\theta_2\|_{L^1}}{\alpha''}\right),
\end{align*}
for all $\theta_0,\theta_1,\theta_2\in L^1$. In particular, this shows that
for all $\theta_0\in L^1$,
\begin{align*}
&\left|dB\left(\theta e^{-\varepsilon\phi (y -\cdot
)}+\frac{e^{-\varepsilon\phi (y -\cdot
)}-1}{\varepsilon};\theta_0\right)-dB\left(\theta -\phi (y -\cdot);\theta_0\right)\right|\\
\leq &\,
2\varepsilon\frac{e^3}{\alpha''^2}\|\phi\|_{L^\infty}\|B\|_{\alpha''}
\left(\|\theta\|_{L^1}+ \|\phi\|_{L^1}\right)\|\theta_0\|_{L^1}\\
&\qquad\times\max_{t\in[0,1]}\exp\left(\frac{1}{\alpha''}\left(t\left(\|\theta\|_{L^1}+
\|\phi\|_{L^1}\right)+(1-t)\left(\|\theta\|_{L^1}+\|\phi\|_{L^1}\right)\right)\right)\\
=&
\,2\varepsilon\frac{e^3}{\alpha''^2}\|\phi\|_{L^\infty}\|B\|_{\alpha''}
\left(\|\theta\|_{L^1}+ \|\phi\|_{L^1}\right)
\exp\left(\frac{1}{\alpha''}\left(\|\theta\|_{L^1}+\|\phi\|_{L^1}\right)\right)\|\theta_0\|_{L^1},
\end{align*}
where we have used the inequalities
\begin{align*}
\|\theta e^{-\varepsilon\phi (y -\cdot )}-\theta\|_{L^1}&\leq\eps\|\phi\|_{L^\infty}\|\theta\|_{L^1},\\
\Bigl\|\frac{e^{-\varepsilon\phi (y -\cdot
)}-1}{\varepsilon}+\phi (y -\cdot
)\Bigr\|_{L^1}&\leq\eps\|\phi\|_{L^\infty}\|\phi\|_{L^1},\\
\Bigl\|\theta e^{-\varepsilon\phi (y -\cdot )}+\frac{e^{-\varepsilon\phi (y -\cdot
)}-1}{\varepsilon}\Bigr\|_{L^1}&\leq\|\theta\|_{L^1}+\|\phi\|_{L^1}.
\end{align*}
In other words, we have shown that the norm of the bounded linear functional
on $L^1$
\[
L^1\ni\theta_0\mapsto dB\left(\theta e^{-\varepsilon\phi (y -\cdot )}+\frac{e^{-\varepsilon\phi (y -\cdot
)}-1}{\varepsilon};\theta_0\right)-dB\left(\theta -\phi (y -\cdot);\theta_0\right)
\]
is bounded by
\[
Q:=2\varepsilon\frac{e^3}{\alpha''^2}\|\phi\|_{L^\infty}\|B\|_{\alpha''}
\left(\|\theta\|_{L^1}+ \|\phi\|_{L^1}\right)
\exp\left(\frac{1}{\alpha''}\left(\|\theta\|_{L^1}+\|\phi\|_{L^1}\right)\right).
\]
Since this operator norm is given by
\[
\left\|\delta B\left(\theta e^{-\varepsilon\phi (y -\cdot )}+\frac{e^{-\varepsilon\phi (y -\cdot
)}-1}{\varepsilon};\cdot\right)-\delta B\left(\theta -\phi (y -\cdot);\cdot\right)\right\|_{L^\infty}
\]
cf.~Subsection \ref{Subsection2.2}, this means that
\[
\left\|\delta B\left(\theta e^{-\varepsilon\phi (y -\cdot )}+\frac{e^{-\varepsilon\phi (y -\cdot
)}-1}{\varepsilon};\cdot\right)-\delta B\left(\theta -\phi (y -\cdot);\cdot\right)\right\|_{L^\infty}\leq Q.
\]

In this way we obtain
\begin{align*}
&\left|(\tilde L_{\varepsilon, \mathrm{ren}}B)(\theta)-(\tilde L_V B)(\theta)\right|\\
\leq &\,  \int_{\R^d}dx\int_{\R^d}dy\,a(x-y)\left|\theta(y)-\theta(x)\right|\\
&\times\left\{\left\|\delta B\left(\theta e^{-\varepsilon\phi (y
-\cdot )}+\frac{e^{-\varepsilon\phi (y -\cdot
)}-1}{\varepsilon};\cdot\right)-\delta B\left(\theta -\phi (y -\cdot);\cdot\right)\right\|_{L^\infty}\right.\\
&\qquad\left.+\varepsilon\|\phi\|_{L^\infty}\left\|\delta
B\left(\theta -\phi (y
-\cdot);\cdot\right)\right\|_{L^\infty}\vphantom{\frac{e^{-\varepsilon\phi
(y -\cdot
)}-1}{\varepsilon}}\right\}\\
\leq &\,
2\varepsilon\|\phi\|_{L^\infty}\|a\|_{L^1}\frac{e}{\alpha''}
\exp\left(\frac{1}{\alpha''}\left(\|\theta\|_{L^1}+\|\phi\|_{L^1}\right)\right)
\|\theta\|_{L^1}\\
&\qquad\qquad\times
\left\{2\frac{e^2}{\alpha''}\left(\|\theta\|_{L^1}+
\|\phi\|_{L^1}\right) +1\right\}\|B\|_{\alpha''},
\end{align*}
and thus
\begin{align*}
&\|\tilde L_{\varepsilon, \mathrm{ren}}B-\tilde L_V B\|_{\alpha'}\\
\leq &\,
2\varepsilon\|\phi\|_{L^\infty}\|a\|_{L^1}\frac{e}{\alpha''}e^{\frac{\|\phi\|_{L^1}}{\alpha''}}\left\{2\frac{e^2}{\alpha''}\sup_{\theta\in
L^1}\left(\|\theta\|_{L^1}^2
\exp\left(\|\theta\|_{L^1}\left(\frac{1}{\alpha''}-\frac{1}{\alpha'}\right)\right)\right)\right.\\
&\left.+\left(2\frac{e^2}{\alpha''}\|\phi\|_{L^1}+1\right)\sup_{\theta\in
L^1}\left(\|\theta\|_{L^1}\exp\left(\|\theta\|_{L^1}\left(\frac{1}{\alpha''}-\frac{1}{\alpha'}\right)\right)\right)\right\}\|B\|_{\alpha''},
\end{align*}
and the proof follows using the inequalities
$xe^{-mx}\leq\frac{1}{me}$ and $x^2e^{-mx}\leq\frac{4}{m^2e^2}$ for $x\geq0$,
$m>0$.
\end{proof}

We are now in conditions to state the following result.

\begin{theorem}\label{Props1} Given an $0<\alpha<\alpha_0$, let
$B_{t,\mathrm{ren}}^{(\varepsilon)}, B_{t,V}$,
$t\in\left[0,T\right)$, be the local solutions in $\mathcal{E}_{\alpha}$ to
the initial value problems \eqref{V12}, \eqref{V19} with
$B_{0,\mathrm{ren}}^{(\varepsilon)},B_{0,V}\in\mathcal{E}_{\alpha_0}$. If
$0\leq\phi\in L^1\cap L^\infty$ and $\lim_{\varepsilon\rightarrow 0}\|B_{0,\mathrm{ren}}^{(\varepsilon)}-B_{0,V}\|_{\alpha_0}=0$, then, for each
$t\in\left[0,T\right)$,
\[
\lim_{\varepsilon\rightarrow 0}\|B_{t,\mathrm{ren}}^{(\varepsilon)}-B_{t,V}\|_{\alpha}=0.
\]
Moreover, if
$B_{0,V}(\theta)=\exp\left(\int_{\mathbb{R}^d}dx\,\rho_0(x)\theta(x)\right)$,
$\theta\in L^1$, for some function $0\leq\rho_0\in L^\infty$ such that
$\|\rho_0\|_{L^\infty}\leq \frac{1}{\alpha_0}$, then for each
$t\in\left[0,T\right)$,
\begin{equation}\label{expsol}
B_{t,V}(\theta)=\exp\left(\int_{\mathbb{R}^d}dx\,
\rho_t(x)\theta(x)\right), \quad \theta\in L^1,
\end{equation}
where $0\leq\rho_t\in L^\infty$ is a classical solution to the equation
\eqref{Vl42}.
\end{theorem}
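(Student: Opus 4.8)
For the convergence assertion, the plan is to apply the abstract comparison theorem in a scale of Banach spaces, \cite[Theorem 4.3]{FKO11}, to the two Cauchy problems \eqref{V12} and \eqref{V19}; I would simply check that its three hypotheses are met. First, by Proposition \ref{estimativas}(ii) the generators $\tilde L_{\varepsilon,\mathrm{ren}}$, $\varepsilon>0$, and $\tilde L_V$ obey one and the same Ovsjannikov-type bound on the scale $\{\mathcal{E}_\alpha\}$ with a constant independent of $\varepsilon$; as already observed in the discussion preceding Proposition \ref{Prop3}, this is what makes all the solutions $B_{t,\mathrm{ren}}^{(\varepsilon)}$ and $B_{t,V}$ live on a common interval $[0,T)$ with values in the same space $\mathcal{E}_\alpha$. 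Second, Proposition \ref{Prop3} provides the smallness estimate $\|\tilde L_{\varepsilon,\mathrm{ren}}B-\tilde L_V B\|_{\alpha'}\le\varepsilon\,C(\alpha',\alpha'')\,\|B\|_{\alpha''}$, whose dependence on $\alpha''-\alpha'$ (a $(\alpha''-\alpha')^{-1}$ term together with a $(\alpha''-\alpha')^{-2}$ term) is exactly of the form admitted by \cite[Theorem 4.3]{FKO11}. Third, the hypothesis $\lim_{\varepsilon\to0}\|B_{0,\mathrm{ren}}^{(\varepsilon)}-B_{0,V}\|_{\alpha_0}=0$ supplies the convergence of the initial data. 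Granting these, \cite[Theorem 4.3]{FKO11} delivers $\lim_{\varepsilon\to0}\|B_{t,\mathrm{ren}}^{(\varepsilon)}-B_{t,V}\|_{\alpha}=0$ for each $t\in[0,T)$.

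For the exponential (chaos-preservation) assertion, the plan is to exhibit the exponential functional as a solution of \eqref{V19} and to conclude by uniqueness. First I would take $\rho_t$ to be the classical solution of \eqref{Vl42} with initial value $\rho_0$, whose existence together with $0\le\rho_t\in L^\infty$ is guaranteed by \cite{BKKK11}, and set $\hat B_t(\theta):=\exp(\int_{\R^d}dx\,\rho_t(x)\theta(x))$. Since $\rho_t\ge0$ one has $|\hat B_t(\theta)|\le\exp(\|\rho_t\|_{L^\infty}\|\theta\|_{L^1})$, so $\hat B_t$ is entire and belongs to $\mathcal{E}_\alpha$ with $\|\hat B_t\|_\alpha\le1$ whenever $\|\rho_t\|_{L^\infty}\le1/\alpha$; in particular $\hat B_0=B_{0,V}\in\mathcal{E}_{\alpha_0}$ because $\|\rho_0\|_{L^\infty}\le1/\alpha_0$. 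As $\|\rho_0\|_{L^\infty}\le1/\alpha_0<1/\alpha$ and $t\mapsto\rho_t$ is continuous, the bound $\|\rho_t\|_{L^\infty}\le1/\alpha$ should persist on $[0,T)$ (shrinking $T$ if necessary), keeping $\hat B_t\in\mathcal{E}_\alpha$ throughout.

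Next I would verify that $\hat B_t$ solves \eqref{V19}. The first variational derivative of the exponential is $\delta\hat B_t(\theta;x)=\rho_t(x)\hat B_t(\theta)$, so, writing $(\rho_t*\phi)(y)=\int_{\R^d}\rho_t(z)\phi(y-z)\,dz$,
\[
\delta\hat B_t(\theta-\phi(y-\cdot);x)=\rho_t(x)\,e^{-(\rho_t*\phi)(y)}\,\hat B_t(\theta).
\]
Inserting this into the formula for $\tilde L_V$ from Proposition \ref{estimativas}(i) and carrying out the $x$- and $y$-integrations, using that $a$ is even, yields
\[
(\tilde L_V\hat B_t)(\theta)=\hat B_t(\theta)\int_{\R^d}dx\,\theta(x)\Bigl[(\rho_t*a)(x)e^{-(\rho_t*\phi)(x)}-\rho_t(x)(a*e^{-(\rho_t*\phi)})(x)\Bigr],
\]
whereas $\partial_t\hat B_t(\theta)=\hat B_t(\theta)\int_{\R^d}dx\,\theta(x)\,\partial_t\rho_t(x)$. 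Because $\rho_t$ solves \eqref{Vl42}, these two right-hand sides coincide for every $\theta\in L^1$, so $\hat B_t$ solves \eqref{V19} with $\hat B_0=B_{0,V}$. The uniqueness part of the Ovsjannikov result behind Theorem \ref{Th13} (\cite[Theorem 2.5]{FKO11}), applied to $\tilde L_V$ with the estimate of Proposition \ref{estimativas}(ii), then forces $B_{t,V}=\hat B_t$, which is \eqref{expsol}.

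The hard part will be the rigorous underpinning of this last step: one must justify that $t\mapsto\hat B_t$ is differentiable as an $\mathcal{E}_\alpha$-valued curve and that the manipulations above (differentiation under the integral and interchange of $\partial_t$ with the pairing against $\theta$) are legitimate, and, most delicately, one must secure the bound $\|\rho_t\|_{L^\infty}\le1/\alpha$ on the whole interval $[0,T)$ so that $\hat B_t$ never leaves $\mathcal{E}_\alpha$; both hinge on the regularity and a priori estimates for $\rho_t$ from \cite{BKKK11}. By contrast, the convergence assertion requires no new computation beyond confirming that the singular dependence on $\alpha''-\alpha'$ in Proposition \ref{Prop3}, including the quadratic term, falls within the scope of \cite[Theorem 4.3]{FKO11}.
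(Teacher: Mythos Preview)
Your proposal is correct and follows essentially the same route as the paper: the convergence part is obtained by feeding Propositions \ref{estimativas}(ii) and \ref{Prop3} (with $p=2$) into \cite[Theorem 4.3]{FKO11}, and the chaos-preservation part is obtained by checking that the exponential functional solves \eqref{V19} and invoking uniqueness. The one place where the paper is slightly sharper than your sketch is the bound on $\rho_t$: rather than relying on continuity to keep $\|\rho_t\|_{L^\infty}\le 1/\alpha$ (possibly shrinking $T$), the paper cites \cite[Subsection 4.2]{BKKK11} for the a priori estimate $\|\rho_t\|_{L^\infty}\le 1/\alpha_0$, so that $\hat B_t$ never leaves $\mathcal{E}_{\alpha_0}\subset\mathcal{E}_\alpha$ and no adjustment of $T$ is needed.
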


\begin{proof}
The first part follows directly from Proposition~\ref{Prop3} and
\cite[Theorem 4.3]{FKO11}, taking in \cite[Theorem 4.3]{FKO11} $p=2$ and
\[
N_\eps=2\eps\|a\|_{L^1}\|\phi\|_{L^\infty}\frac{e\alpha_0}{\alpha}e^{\frac{\|\phi\|_{L^1}}{\alpha}}\max\left\{2e\|\phi\|_{L^1}+\frac{\alpha_0}{e},8\alpha_0^2\right\}.
\]

Concerning the last part, we begin by observing that it has been shown in
\cite[Subsection 4.2]{BKKK11} that given a $0\leq \rho_0\in L^\infty$ such that
$\|\rho_0\|_{L^\infty}\leq\frac{1}{\alpha_0}$, there is a solution
$0\leq\rho_t\in L^\infty$ to \eqref{Vl42} such that
$\|\rho_t\|_{L^\infty} \leq \frac{1}{\alpha_0}$. This implies that
$B_{t,V}$, given by \eqref{expsol}, does not leave the initial Banach space
$\mathcal{E}_{\alpha_0}\subset\mathcal{E}_\alpha$. Then, by an argument of
uniqueness, to prove the last assertion amounts to show that $B_{t,V}$ solves
equation \eqref{V19}. For this purpose we note that for any
$\theta,\theta_1\in L^1$ we have
\[
\frac{\partial}{\partial z_1}
B_{t,V}(\theta+z_1\theta_1)\biggr|_{z_1=0}=B_{t,V}(\theta)\int_{\mathbb{R}^d}dx
\rho_t(x)\theta_1(x),
\]
and thus $\vd B_{t,V}(\theta;x)= B_{t,V}(\theta) \rho_t(x)$. Hence, for all
$\theta\in L^1$,
\begin{align*}
(\tilde L_V B_{t,V})(\theta)=& \,
B_{t,V}(\theta)\left(\int_{\R^d}dx\int_{\R^d}dy\,a(x-y)
\left(\theta(y)-\theta(x)\right)\rho_t(x)e^{-(\rho_t*\phi)(y)}\right)\\
=&
\,B_{t,V}(\theta)\left(\int_{\R^d}dy\,\theta(y)\left(a*\rho_t\right)(y)
e^{-(\rho_t*\phi)(y)}\right.\\
&\qquad\qquad\left.-\int_{\R^d}dx\,\theta(x)\left(a*e^{-(\rho_t*\phi)(y)}\right)(x)\rho_t(x)\right).
\end{align*}
Since $\rho_t$ is a classical solution to \eqref{Vl42}, $\rho_t$ solves a weak
form of equation \eqref{Vl42}, that is, the right-hand side of the latter
equality is equal to
\[
B_{t,V}(\theta)\frac{d}{dt}\int_{\mathbb{R}^d}dx\,
\rho_t(x)\theta(x) = \frac{\partial}{\partial t}
B_{t,V}(\theta).\qedhere
\]
\end{proof}

\end{document}